\documentclass[fleqn]{elsarticle}
\usepackage[mathscr]{eucal}
\usepackage{amsmath,amsfonts,amssymb,amsthm}
\usepackage{graphicx}
\usepackage{mathtools}
\usepackage[dvipsnames]{xcolor}
\usepackage{booktabs}
\usepackage{tabularx}
\usepackage{hyperref}
\usepackage{multirow}
\usepackage[subnum]{cases}
\usepackage{color}
\usepackage{cases}
\usepackage{natbib}
\usepackage{tgpagella}
\usepackage{subfigure}
\usepackage{caption}
\usepackage[english]{babel}
\theoremstyle{plain}
\newtheorem{theorem}{Theorem}
\newtheorem{conjecture}{Conjecture}

\newtheorem{lemma}{Lemma}

\newtheorem{proposition}{Proposition}

\newtheoremstyle{note}{\topsep}{\topsep}{\slshape}{}{\scshape}{}{ }{}
\theoremstyle{note}

\biboptions{sort&compress}
\numberwithin{equation}{section}
\numberwithin{definition}{section}

\mathtoolsset{mathic,centercolon}
\newcommand\bq{{\mathbf q}}

\newcommand\bv{{\mathbf v}}

\newcommand\bF{{\mathbf F}}

\newcommand\bM{{\mathbf M}}

\newcommand\scE{{\mathscr E}}

\newcommand\scG{{\mathscr G}}

\newcommand\scN{{\mathscr N}}

\newcommand\mvector{\boldsymbol}

\newcommand\vp{\mvector{p}}
\newcommand\vq{\mvector{q}}

\newcommand\vv{\mvector{v}}

\newcommand\vx{\mvector{x}}
\newcommand\vy{\mvector{y}}
\newcommand\vz{\mvector{z}}
\newcommand\vA{\mvector{A}}

\newcommand\vX{\mvector{X}}

\newcommand\vvarphi{\mvector{\varphi}}

\newcommand\field{\mathbb}

\newcommand\R{\field{R}}

\newcommand\C{\field{C}}
\newcommand\Z{\field{Z}}

\newcommand\N{\field{N}}
\newcommand\Q{\field{Q}}

\newcommand\const{\operatorname{const}}

\newcommand\rmd{\mathrm{d}}

\newcommand\rmi{\mathrm{i}\mspace{1mu}}
\newcommand\rme{\mathrm{e}}

\newcommand\Dz{\frac{\mathrm{d}\phantom{z} }{ \mathrm{d}z}}

\usepackage{subfigure}
\begin{document}
	
\begin{frontmatter}
	\title{Dynamics and non-integrability of the variable-length double pendulum: exploring chaos and periodicity via the Lyapunov refined maps}
	
	\author{Wojciech Szumi\'nski$^{1*}$}
	\ead{w.szuminski@if.uz.zgora.pl}
	\address{$^1$Institute of Physics, University of Zielona G\'ora, Licealna 9, PL-65-407, Zielona G\'ora, Poland}
		\author{Tomasz Kapitaniak$^{2}$}
	\ead{tomasz.kapitaniak@p.lodz.pl}
	\address{$^2$Division of Dynamics, Lodz University of Technology, Stefanowskiego 1/15, 90–537 Lodz,
Poland}
	\begin{abstract}
This paper extends our previous work~(Szumi\'nski and Maciejewski, 2024), where we explored the dynamics and integrability of the double-spring pendulum. Here, we investigate the variable-length double pendulum, a three-degree-of-freedom Hamiltonian system combining features of the classic double pendulum and the swinging Atwood machine. With its intricate dynamics, this system is crucial for studying nonlinear phenomena such as high-order resonances, chaos, and bifurcations. We address the challenges posed by high-dimensional phase spaces using a novel tool, the \textit{Lyapunov refined maps}, which integrates Poincar\'e sections, phase-parametric diagrams, and Lyapunov exponents. This framework comprehensively analyzes periodic, quasi-periodic, and chaotic behaviors. By measuring the strength of chaos, it also offers insights into the system's dynamical structure. Additionally, we apply Morales-Ramis theory to examine integrability, leveraging the differential Galois group of variational equations to establish non-integrability conditions. The Kovacic algorithm is used to analyze the solvability of higher-dimensional differential equations, complemented by Lyapunov exponent diagrams to exclude integrable dynamics under certain parameters. Our findings advance the fundamental understanding of variable-length pendulum dynamics, offering new insights and methodologies for further research with potential applications in adaptive robotics, energy harvesting, and biomechanics. Additionally, this work represents a significant step toward proving the long-sought non-integrability of the classical double pendulum.
\end{abstract}
	
	\begin{keyword}
 Variable-length pendulum; Ordinary differential equations; Chaos in Hamiltonian
 systems;   Integrability  \paragraph{Declaration} The article has been published in~\cite{Szuminski:25::}, and the final version is available at: \textbf{\href{https://doi.org/10.1016/j.jsv.2025.119099}{https://doi.org/10.1016/j.jsv.2025.119099} }
\end{keyword}

\end{frontmatter}
\section{Introduction  and motivation}
The study of pendulum systems represents one of the cornerstones of classical mechanics and dynamical systems, dating back to the foundational work of Galileo Galilei. The pendulum’s simple structure - a mass suspended from a fixed point and oscillating under the influence of gravity — presents a deceptively simple system that exhibits regular motion. However, despite such a simple form of the equation of motion $\ddot \varphi+\sin \varphi=0$, its solutions are given in terms of the Jacobi elliptic functions. Thus, the possibility of finding an explicit form of solutions to systems of differential equations is not obvious and - in fact - is something special. Therefore, the concept of integrability of a system was introduced, which is easier to analyze
than its solvability. Integrability is related to the existence of first integrals (conservation laws) or other invariant tensor fields, which remain constant on all solutions of equations of motion describing the studied system. 

Historically, pendulum systems have played a key role in understanding how regular motion can shift to chaotic behavior.  The paradigm models in pendulum dynamics include the spring pendulum ~\cite{Broucke:73::,Lee:97::,Maciejewski:04::c}, the magnetic pendulum~\cite{Moon:87::,Kecik:15::,Kwuimy:12::,Chen:22::}, and multi-body systems such as the double and triple pendulums~\cite{Shinbrot:92::,Stachowiak:06::,mp:13::c,Stachowiak:15::,PUZYROV2022116699,Nigmatullin:14::,MR4191961,MR4412899,Puzyrov:22::},  the coupled pendulums~\cite{Huynh2010,Huynh2013,Elmandouh:16::,Szuminski:20::,Szuminski:23::} and the swinging Atwood machine~\cite{Tufillaro:84::,Tufillaro:90::,Szuminski:22::,Olejnik:23b::}. Such systems serve as essential models in studying chaotic motion, bifurcations, and stability transitions. Researchers use them as benchmarks to understand how small changes in initial conditions can lead to completely different behaviors.  Indeed, these models have been extensively studied by researchers both theoretically and practically~\cite{Levien:93::,Pujol:10::,Gomez:21::,ciezkowski:21::,Pilipchuk:22::,Chu:22::,Olejnik:23b::}.

Currently, the various types of variable-length pendulums are undergoing extensive analysis across different methods and techniques~\cite{Yakubu:22::,Szuminski:22::,Szuminski:23::,Olejnik:23b::,Szuminski:24::,Yakubu:24::}.  Due to their complex dynamics and sensitivity to initial conditions, they have extensive physical, astronomical,  and engineering applications. 
 For instance, the spring pendulum can be treated as a classical analog of the quantum phenomenon of 	 Fermi resonance in the infrared spectrum of carbon dioxide~\cite{Vitt:33::,MR1751314}, and currently, it has been treated as a system with potential applications in the atmosphere modeling~\cite{MR1948160,MR2043791,DeShazer}. Whereas the double spring pendulum model as studied in~\cite{Szuminski:24::}, can be treated as a hypothetical model of the active debris removal
missions~\cite{Ledkov:19::,Shi:18::,Aslanov:24::,Bourabah:22::}.   In engineering, however,  the variable-length pendulums have 
 many applications in harvesters, load-lifting equipment, and robotics~\cite{Wojna:18::,Mahmoudkhani:18::,Liu:19::,Sharghi:22::,Yang:22::,Pilipchuk:22::,Chu:22::,Olejnik:23b::,Meng:24::}.
 Analyzing the dynamics of the double pendulum system helps engineers design control algorithms, especially for bipedal robots or in aerodynamics where motion stability is a critical aspect~\cite{Selyutskiy:19::,4153387,Sahin:17::}.

%The dynamics of the pendulum systems have been extensively studied across a wide array of scientific and engineering disciplines, resulting in a wealth of literature focused on their chaotic behavior, stability, and sensitivity to initial conditions~\cite{Broucke:73::,Lee:97::,Maciejewski:04::c,ZHANG2020115549,Elmandouh:16::,Szuminski:20::}.
 As one of the most accessible and illustrative examples of chaotic systems, the classic double pendulum is a cornerstone model for studying nonlinear dynamics and chaotic motion~\cite{Dullin:94::,Ivanov:99::}. Numerous publications have explored its dynamics from various perspectives, including analytical studies on its stability and integrability, computational simulations to capture its chaotic motion~\cite{Yu:98::,Shinbrot:92::}, and experimental setups to observe its motion in real time~\cite{Wojna:18::}. These studies have provided a deep understanding of the fundamental principles of chaos theory and nonlinear mechanics, highlighting the double pendulum’s role as a pivotal system in dynamical systems research. Moreover, plenty of modifications of the original double pendulum have been studied. For instance, the restricted double pendulum~\cite{Samaranayake:93::,Stachowiak:15::}, the magnetic double pendulum~\cite{Wojna:18::,Polczynski:19::,Skurativskyi:22::,Moatimid:23::}, double pendulum with distributed mass~\cite{Rafat:09::,Huang:15::}, Rott's pendulum~\cite{Dyk:24::,Dyk:24b::} or the  rotating double pendulum~\cite{Maiti:16::,Kudra:19::}, to mention just a few.

Despite the extensive research on the fixed-length double pendulum, there is a clear gap in the study of the variable-length double pendulum. This version adds complexity because it allows one or both arms to change length, creating adjustable parameters that affect its dynamic behavior. Unlike the fixed-length system, the variable-length double pendulum presents unique challenges for both theoretical analysis and practical experimentation due to its higher-dimensional phase space and the increased number of variables that affect its motion. This complexity not only makes the system more difficult to analyze using standard techniques but also opens up new avenues for research, particularly in understanding how length variability influences the onset of chaos, stability, and resonance phenomena.
Thus, while the studies of the classic double pendulum are well established, the variable-length double pendulum remains relatively unexplored. This lack of comprehensive analysis points to an opportunity for new studies that could bridge the gap in our understanding, offering insights into the dynamic behavior of systems with adaptable structural configurations. By addressing this gap, future research could reveal novel properties and potential applications of variable-length pendulum systems in fields such as adaptive robotics, energy harvesting, and biomechanics, where variable geometries play a crucial role in system performance and functionality~\cite{PLAUT20133768,YANG2022116727,SHARGHI2022117036,Olejnik:23b::}. For instance, variable-length pendulum systems serve as excellent models for cranes and lifting equipment, where understanding motion and stability is essential for safe and efficient operation~\cite{JU2006376,MR4459645,Freundlich:20::,Shahbazi:16::}. Additionally, combining variable-length double pendulum systems with the swinging Atwood machine may have applications in energy conversion and storage, where the swinging motion can be used to generate electricity~\cite{MARSZAL2017251,He:22::,ABOHAMER2023377}.

In this paper, we study the dynamics and integrability of a variable-length double pendulum system that integrates aspects of the double pendulum with the swinging Atwood machine. This three-degree-of-freedom system serves as a natural extension of the classical double pendulum model but introduces additional complexity due to the interplay between the pendular motion and the variable-length constraint. This coupling creates a richer dynamical landscape, one that not only challenges existing numerical methods and techniques but also provides a unique opportunity to extend methodologies in the study of non-integrable systems. Indeed, the best tool to get a quick insight into the dynamics of Hamiltonian systems of two degrees of freedom is the Poincar\'e cross-section method. It provides a beautiful coexistence of periodic, quasi-periodic, and chaotic motion at the cross-section plane, giving a qualitative insight into system dynamics.  
However, the phase space is six-dimensional in a three-degree-of-freedom Hamiltonian system, such as the proposed variable-length double pendulum. The Poincar\'e section in this scenario would reduce the dimensionality by just one, resulting in a five-dimensional hypersurface defined by a constant energy level, which is beyond our ability to visualize or interpret in a physically intuitive way directly. Indeed,  resonances and quasi-periodic orbits make structures more complex than simple closed curves or tori.
Consequently, much of the valuable qualitative insight is lost, making the Poincar\'e sections method almost unusable in this case.  For these reasons, alternative approaches are needed.

On the other side, even though the Lyapunov exponents provide a quantitative description of chaos and can be effectively applied to a Hamiltonian system with many degrees of freedom, it says nothing about the existence of periodic solutions, their numbers or periodicity in a regular regime, where all exponents are zero. This ambiguity poses a considerable challenge, as understanding the existence of resonance orbits is crucial in Hamiltonian dynamical systems. However, constructing phase-parametric diagrams provides an effective approach to address this issue.  In this paper, we continue and extend our previous works as evidenced in~\cite{Szuminski:23::,Szuminski:24::}, by
combining three numerical methods such as the Poicnar\'e sections, the phase-parametric diagrams, and the Lyapunov exponents, as one powerful tool, named the \textit{Lyapunov refined maps}.
This approach provides a complete view of the system’s dynamics, capturing both qualitative and quantitative aspects by assessing the strength of chaos and covering periodic, quasi-periodic, and chaotic behaviors.

However, due to technical limitations, even the best numerical analysis can only partially determine whether a given dynamical system that depends on several parameters is integrable for certain values of these parameters.
To find new integrable cases or prove the considered model's nonintegrability, one can use the Morales–Ramis theory~\cite{Morales:99::,Morales:00::}. 
 Morales-Ruiz and Ramis showed that integrability in the Liouville sense imposes a very restrictive condition for the identity component of the differential Galois group of variational equations obtained by the lineralization of the equations of motion along a particular non-equilibrium solution. The main theorem of this theory states. 
\begin{theorem}[Morales--Ramis (1999)]
 If a Hamiltonian system is integrable in the sense of Liouville in a neighborhood of a particular solution, then the identity component of the
  differential Galois group of the variational equations along this solution is Abelian.
\end{theorem}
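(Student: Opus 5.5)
The plan follows the original Morales--Ramis argument: pass from first integrals of the Hamiltonian system to rational first integrals of the variational equations, and then show that these integrals force the identity component of the differential Galois group to be Abelian.

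\emph{Setup.} Let $\Gamma$ be a non-equilibrium particular solution, viewed as a Riemann surface $\Gamma\subset M$ in the complex symplectic manifold $M$ of dimension $2n$. Let $F_1=H,\dots,F_n$ be meromorphic first integrals that are functionally independent and pairwise in involution on a neighbourhood of $\Gamma$. The variational equations $\dot\xi = A(t)\xi$ are linear over the differential field $K$ of meromorphic functions on $\Gamma$. Let $G$ denote their differential Galois group. Because the flow is Hamiltonian, $G\subset \mathrm{Sp}(2n,\C)$.

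\emph{From integrals of the system to integrals of the variational equations.} For each $F_i$ we take the leading nonzero term of its Taylor (or Laurent) expansion along $\Gamma$ in the transverse variables. This is the step where Ziglin's lemma enters. It produces homogeneous polynomials $f_i(\xi)$, with coefficients in $K$, that are first integrals of the variational equations. After a triangular change of generators, these $f_i$ remain functionally independent. The Poisson bracket of two leading terms is either zero or the leading term of the bracket of the originals, which vanishes. Hence the $f_i$ are in involution with respect to the symplectic form on the fibres. Since $\Gamma$ is not an equilibrium, $H$ and its linearisation can be handled exactly as in Ziglin's reduction to the normal variational equations.

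\emph{From rational integrals to invariants of $G$.} The key Galois-theoretic fact is that a rational function on the fibre with coefficients in $K$ is a first integral of the linear system exactly when it is invariant under $G$, and hence also under its identity component $G^0$. We then pass to the Lie algebra $\mathfrak g$ of $G^0\subset \mathrm{Sp}(2n,\C)$. Each element of $\mathfrak g$ is a linear Hamiltonian vector field $X_h$ with $h$ a quadratic form. The $n$ independent involutive invariants $f_1,\dots,f_n$ are annihilated by all such $X_h$. For generic $\xi$ the common level set is Lagrangian, and the fields of $\mathfrak g$ are tangent to it. Together with involutivity, a linear-algebra lemma then shows that the quadratic Hamiltonians $h$ Poisson-commute on a Zariski-dense set, so $[\mathfrak g,\mathfrak g]=0$. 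Since $G^0$ is connected, it follows that $G^0$ is Abelian.

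\emph{Main obstacle.} I expect the hardest step to be the preservation of functional independence and involutivity when passing to the leading terms $f_i$. Leading terms of independent functions may be dependent, so one must first replace the $F_i$ by suitable polynomial combinations, as in Ziglin's lemma. This replacement has to be done without destroying the involution property. Meromorphy also requires care at singular points of $\Gamma$, so I would work over the field $K$ and allow rational, not merely polynomial, integrals throughout.
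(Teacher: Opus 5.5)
The paper gives no proof of this statement; it quotes it from Morales-Ruiz and Ramis and uses it as a black box. Your outline is their original argument, and it is correct: Ziglin's lemma gives independent leading terms along $\Gamma$, rational first integrals of the variational equations over $K$ correspond to rational invariants of $G$, and the Lagrangian argument forces the quadratic Hamiltonians of $\mathfrak{g}\subset\mathfrak{sp}(2n,\C)$ to Poisson-commute.

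A few calibrations.

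\emph{The flagged obstacle.} Of the difficulty you flag, the independence part is exactly Ziglin's lemma. The involution part is automatic, since $\{P(F),Q(F)\}=\sum_{i,j}\partial_iP\,\partial_jQ\,\{F_i,F_j\}=0$ for any polynomials $P,Q$.

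\emph{The passage to invariants.} This is the step that deserves an explicit line. Put $f_i^0(c)=f_i(t,\Phi(t)c)$ for a fundamental matrix $\Phi$. Its coefficients are constants because $f_i$ is a first integral. Since $\sigma(\Phi)=\Phi M_\sigma$ and $\sigma$ fixes $K$, you get $f_i^0(M_\sigma c)=f_i^0(c)$. Choose $\Phi$ symplectic, which is possible because $\Phi^{T}\mathbb{J}\Phi$ is constant. This choice is what makes the $f_i^0$ independent and in involution on the solution space, and that is where your Lagrangian argument is actually applied.

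\emph{Leading terms.} They are taken in all fibre variables of $T_\Gamma M$, not only the transverse ones. This is why $H$ contributes the nonzero linear integral $dH(\xi)$ when $\Gamma$ is not an equilibrium. For meromorphic $F=P/Q$ the leading terms are homogeneous rational functions $P^\circ/Q^\circ$. No reduction to the normal variational equations is needed for this statement; that is a separate corollary.
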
	
	Over the past twenty years, Morales–Ramis theory has been successfully applied to a wide range of important physical systems~\cite{Yagasaki:18::,Acosta:18::,Acosta:18b::,Huang:18::,Combot:18::,Mnasri:18::,Shibayama:18::,Maciejewski:18::}, including non-Hamiltonian systems~\cite{Huang:18::,Szuminski:18::,Maciejewski:20e::,Szuminski:20c::}, to name just a few. This has led to the discovery of numerous new integrable and super-integrable systems~\cite{Maciejewski:05::b,Elmandouh:18::,Szuminski:18a::,Szuminski:18b::}. 

The main idea of the above theory is rather simple; we compute the variational equations along a particular non-equilibrium solution, and we analyze the differential Galois group of the variational system. However, this analysis is the most challenging point, since there is no universal method to check whether the identity component of the differential Galois group is Abelian or not. For Hamiltonian systems of two degrees of freedom,  for which in most cases the variational equations can be transformed into the one second-order rational, differential equation, 
there exists an algorithm called the Kovacic algorithm~\cite{Kovacic:86::}, which classifies the possible types of solutions of such equation. As a by-product,  it determines the differential Galois group of the equation.  In Hamiltonian systems with three degrees of freedom, the normal variational equations typically form a four-dimensional subsystem, making the analysis of its differential Galois group considerably more complex. Unfortunately, no equivalent of the Kovacic algorithm exists for higher-order linear differential equations with rational coefficients, although some partial results are available~\cite{Singer:95::,Ulmer:03::}. Recent work~\cite{Combot:18b::} introduces an algorithm for analyzing the differential Galois group of symplectic differential operators in four dimensions, representing perhaps the most comprehensive study to date. 

At the end of this introduction, it is worth noting that for the effective application of Morales–Ramis theory and the whole differential Galois group framework, it is necessary to identify a non-trivial particular solution to the equations of motion for the given dynamical system. This requirement is precisely why a non-integrability proof for the classical double pendulum is still missing — an explicit particular solution has yet to be found. However, in the proposed model, we can obtain a particular solution, and with the help of the four-dimensional Kovacic algorithm, we can establish the non-integrability of the variable-length double pendulum. This result brings us closer than ever to proving the non-integrability of the classical double pendulum. 

The structure of this paper is as follows. In Sec.~\ref{sec:model_1}, we present the model and derive the associated equations of motion. Sec.~\ref{sec:numerical} provides a detailed analysis of the system’s chaotic behavior through qualitative and quantitative approaches, employing numerical techniques such as Lyapunov exponent spectra, phase-parametric diagrams, and Poincar\'e sections. We combine these three numerical methods to build the Lyapunov refined maps.
 In Sec.~\ref{sec:integrability}, we conduct a rigorous integrability analysis using Morales–Ramis theory alongside the four-dimensional Kovacic algorithm, ultimately proving the system’s non-integrability. Essential concepts and theorems relevant to the integrability study of four-dimensional variational equations are provided in the Appendix.
		\section{Description of the system\label{sec:model_1}}
		
	% TODO: \usepackage{graphicx} required
 We consider the double variable-length pendulum with counterweight mass, recently studied in~\cite {Yakubu:22::,Olejnik:23b::}.  As depicted in Fig.~\ref{fig:geometria}, 
the system consists of three masses: $M$, $m_1$, and $m_2$. The masses $M$ and $m_1$ are linked by a non-stretchable string of constant length $d$. The mass $m_1$ is allowed to swing in a
plane, while the second mass $M$  
acts as a counterweight and can only move vertically. Therefore, the masses
$M,m_1$ form a system known as the swinging Atwood machine. However,
to analyze the double pendulum with a variable length, we attach to the mass $m_1$, the second with mass $m_2$ and length $l_2$.  
		\begin{figure}[t]
\centering{
\resizebox{75mm}{!}{\begingroup%
		  \makeatletter%
		\providecommand\color[2][]{%
			\errmessage{(Inkscape) Color is used for the text in Inkscape, but the package 'color.sty' is not loaded}%
			\renewcommand\color[2][]{}%
		}%
		\providecommand\transparent[1]{%
			\errmessage{(Inkscape) Transparency is used (non-zero) for the text in Inkscape, but the package 'transparent.sty' is not loaded}%
			\renewcommand\transparent[1]{}%
		}%
		\providecommand\rotatebox[2]{#2}%
		\newcommand*\fsize{\dimexpr\f@size pt\relax}%
		\newcommand*\lineheight[1]{\fontsize{\fsize}{#1\fsize}\selectfont}%
		\ifx\svgwidth\undefined%
		\setlength{\unitlength}{248.83760708bp}%
		\ifx\svgscale\undefined%
		\relax%
		\else%
		\setlength{\unitlength}{\unitlength * \real{\svgscale}}%
		\fi%
		\else%
		\setlength{\unitlength}{\svgwidth}%
		\fi%
		\global\let\svgwidth\undefined%
		\global\let\svgscale\undefined%
		\makeatother%
 \begin{picture}(1,0.56950485)%
    \put(0,0){\includegraphics[width=\unitlength,page=1]{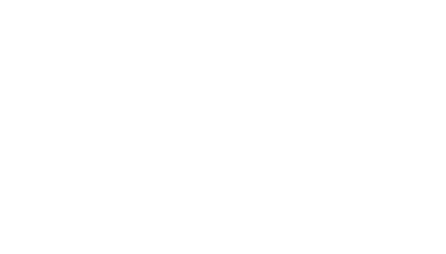}}%
    \put(0.10089368,0.17198639){\color[rgb]{0,0,0}\makebox(0,0)[lt]{\smash{$m_1$}}}%
    \put(0,0){\includegraphics[width=\unitlength,page=2]{geom_2.pdf}}%
    \put(0.48028677,0.01792731){\color[rgb]{0,0,0}\makebox(0,0)[lt]{\smash{ $m_2$}}}%
    \put(0.14140045,0.20174878){\color[rgb]{0,0,0}\makebox(0,0)[lt]{\smash{ }}}%
    \put(0.05040873,0.2916825){\color[rgb]{0,0,0}\makebox(0,0)[lt]{\smash{ $\varphi_1$}}}%
    \put(0.19401878,0.12949936){\color[rgb]{0,0,0}\makebox(0,0)[lt]{\smash{ $\varphi_2$}}}%
    \put(0.15099237,0.28805061){\color[rgb]{0,0,0}\makebox(0,0)[lt]{\smash{ $l_1(t)$}}}%
    \put(0.00094453,0.35482695){\color[rgb]{0,0,0}\makebox(0,0)[lt]{\smash{ $x$}}}%
    \put(0.07608132,0.42578338){\color[rgb]{0,0,0}\makebox(0,0)[lt]{\smash{ $y$}}}%
    \put(0.32862837,0.1352914){\color[rgb]{0,0,0}\makebox(0,0)[lt]{\smash{ $l_2$}}}%
    \put(0,0){\includegraphics[width=\unitlength,page=3]{geom_2.pdf}}%
    \put(0.90104761,0.25034449){\color[rgb]{0,0,0}\makebox(0,0)[lt]{\smash{ $X$}}}%
    \put(0.91497539,0.05867106){\color[rgb]{0,0,0}\makebox(0,0)[lt]{\smash{ $M$}}}%
    \put(0,0){\includegraphics[width=\unitlength,page=4]{geom_2.pdf}}%
  \end{picture}%
    	\endgroup}
\caption{The geometry of a variable-length double pendulum with a counterweight mass. }
		\label{fig:geometria}
}
\end{figure}
	Looking at Fig.~\ref{fig:geometria}, we can immediately derive the corresponding kinetic and  potential energies
	\begin{equation}
		\begin{split}
			\label{eq:lagrangian_cartesian}
			\begin{split}
 T=\frac{M}{2}\dot X^2+\frac{m_1}{2}\left(\dot x_1^2+\dot y_1^2\right)+\frac{m_2}{2}\left(\dot x_2^2+\dot y_2^2\right),\quad  V=-g(M X+m_1 x_2+m_2 x_2).
			\end{split}
		\end{split}
	\end{equation}
The system is constrained holonomically, such that the length of the string is constant
	\begin{equation*}
		\sqrt{x_1^2+y_1^2}+X=d=\operatorname{const}.
	\end{equation*}
	Thus, we introduce the polar coordinates according to the constraint
	\begin{equation*}
		\begin{aligned}
			&	x_1=l_1(t)\cos\varphi_1(t),&& y_1=l_1(t)\sin\varphi_1(t),\\
			& x_2=x_1+l_2 \cos\varphi_2(t),&& y_2=y_1+l_2\sin\varphi_2(t).
		\end{aligned}
	\end{equation*}
	In these coordinates, the kinetic and potential energies~\eqref{eq:lagrangian_cartesian}   takes the form
	\begin{equation}
		\begin{aligned}
			\label{eq:lagrangian_polar}
			T&=\left(\frac{M+m_1+m_2	}{2}\right)\dot l_1^2+\left(\frac{m_1+m_2}{2}\right)l_1^2\dot\varphi_1^2+\frac{m_2}{2}l_2^2\dot \varphi_2^2
			+ m_2l_2\dot \varphi_2[l_1\dot\varphi_1\cos(\varphi_1-\varphi_2)+\dot l_1 \sin(\varphi_1-\varphi_2)],\\	V&= g[Ml_1-(m_1+m_2)l_1\cos\varphi_1-m_2 l_2\cos\varphi_2].	\end{aligned}
	\end{equation}
		The Lagrange’s equations of the second kind are used in the following form
\begin{equation}
\label{eq:lag}
\frac{\rmd}{\rmd t}\left(\frac{\partial T}{\partial \dot q_i}\right)-\frac{\partial T}{\partial  q_i}+\frac{\partial V}{\partial \dot q_i}+\frac{\partial R}{\partial \dot q_i}=Q_i,\qquad i=1,2,3,
\end{equation}
where $ R$ stands for the Rayleigh dissipation function and $Q_i$ is the external force acting on the system.  In this context, we are assuming a conservative case, which means that both the Rayleigh dissipation and the external force acting on the system are equal to zero.  Thus, the system's motion is constrained by the energy first integral $E=T+V=\const$.

To minimize the number of parameters and thus simplify our calculations as much as possible, we rescale the length as $\ell(t) = l_1(t)/l_2$ and introduce a new time variable $t \to \omega_0^{-1} t$, where $\omega_0 = \sqrt{g/l_2}$. The total energy $E$ in its dimensionless form now reads
\begin{equation}
		\begin{split}
		\label{eq:energy}
			&E=\frac{1+\mu_1}{2}v^2+\frac{\ell^2}{2}\omega_1^2+\frac{\mu_2}{2}\omega_2^2+\mu_2\sin(\varphi_1-\varphi_2)v\,\omega_2	
			+\mu_2\ell\cos(\varphi_1-\varphi_2)\omega_1\omega_2
			+	\ell(\mu_1-\cos\varphi_1)-\mu_2\cos\varphi_2.
		\end{split}
	\end{equation}
	Here, $		\mu_1\geq 1,$ and $0\leq\mu_2\leq 1$ are dimensionless parameters defined by
	\begin{equation}
		\label{eq:parki1}
		\mu_1\equiv\frac{M}{m_1+m_2},\qquad \mu_2\equiv\frac{m_2}{m_1+m_2}.
	\end{equation}
Although the system is Hamiltonian, we will not introduce canonical variables
because they lead to a much more complicated form of equations of motion.   Denoting $\bq=[\ell,\varphi_1,\varphi_2]^T$, the Lagrange equations of motion~\eqref{eq:lag} can be written in the following matrix form
\begin{equation}
\label{eq:lag_2}
\bM(\bq)\ddot \bq+\bF_c(\bq,\dot \bq)+\bF_q(\bq)=0,
\end{equation}
where the position-dependent mass matrix $\bM$ is as follows
\begin{equation*}
\label{eq:MM}
\bM=\begin{bmatrix}
1+\mu_1&0&\mu_2\sin \Delta_\varphi\\
0&\ell^2&\mu_2\ell \cos\Delta_\varphi\\
\mu_2\sin \Delta_\varphi&\mu_2\ell \cos \Delta_\varphi &\mu_2
\end{bmatrix},  \qquad  \Delta_\varphi\equiv\varphi_1-\varphi_2.
\end{equation*} 
The  vector  of nonlinear coupling terms  $\bF_c$ and the vector of the potential gravitational force $\bF_g$, are given by
\begin{equation*}
\label{eq:Fc_Fg}
\begin{split}
\bF_c=
\begin{bmatrix}
-\ell^2\dot\varphi_1^2-\mu_2\dot\varphi_2^2\cos\Delta_\varphi\\
\mu_2\ell\dot\varphi_2^2\sin\Delta_\varphi+2\ell \dot \ell \dot \varphi_1\\
\mu_2\dot \varphi_1(2\dot \ell\cos\Delta_\varphi -\ell \dot \varphi\sin \Delta_\varphi )
\end{bmatrix},\qquad  \bF_q=\begin{bmatrix}
\mu_1-\cos\varphi_1\\
\ell \sin\varphi_1\\
\mu_2 \sin \varphi_2
\end{bmatrix}.
\end{split}
\end{equation*}
If, we assume that the matrix $\bM$ is non-singular, i.e., $\mu_2\neq 0$, then by introducing the velocity vector $\dot \bq=\bv=[v,\omega_1,\omega_2]^T$, we can rewrite the Lagrange equations~\eqref{eq:lag_2} as a system of six first-order ODEs, which yields
\begin{equation}
\label{eq:rhs}
\dot \bq=\bv,\qquad \dot \bv=-\bM(\bq)^{-1}\left[\bF_c(\bq,\dot \bq)+\bF_q(\bq)\right].
\end{equation}
For $\mu_2=0$, the system reduces to the classical swinging Atwood machine, and therefore, we exclude this value from our further considerations.

\section{The dynamics\label{sec:numerical}}
The initial results of the analysis of the pendulum model obtained in the studies~\cite {Yakubu:22::,Olejnik:23b::}, indicate the highly complex dynamics of the variable-length double pendulums. On the other hand, we notice that for $\mu_2=0$, which implies $m_2=0$, the system and the energy function~\eqref{eq:energy} reduce to the ordinary swinging Atwood machine, which is a system of two degrees of freedom~\cite{Tufillaro:85::,Tufillaro:90::}. Therefore, we start the numerical analysis by assuming small values of $\mu_2$. 
We conduct a numerical analysis to explore the system’s dynamics using three key techniques: Lyapunov exponents diagrams (Lyapunov diagrams, in short), phase-parametric diagrams, and the Poincar\'e cross-sections method. We combine these three numerical methods as one powerful tool to get an exhaustive view of the system's dynamics.

	\subsection{Lyapunov exponents diagrams}
	\begin{figure*}[t]	\centering
\subfigure[$\varphi_{10}=0.25\pi$]{
		\includegraphics[width=0.32\linewidth]{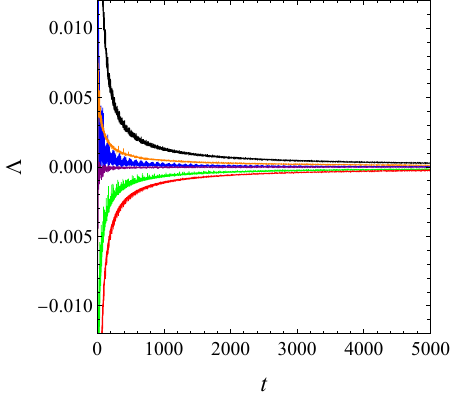}	}
		\subfigure[$\varphi_{10}=0.5\pi$]{\includegraphics[width=0.32\linewidth]{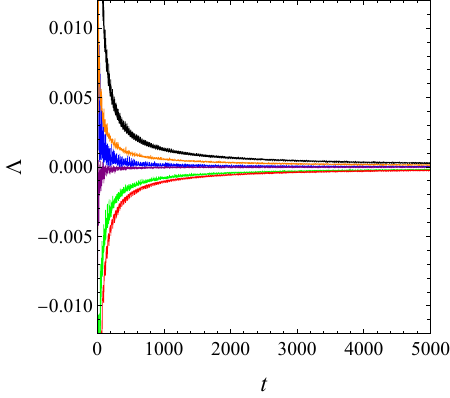}}\subfigure[$\varphi_{10}=0.75\pi$]{
			\includegraphics[width=0.32\linewidth]{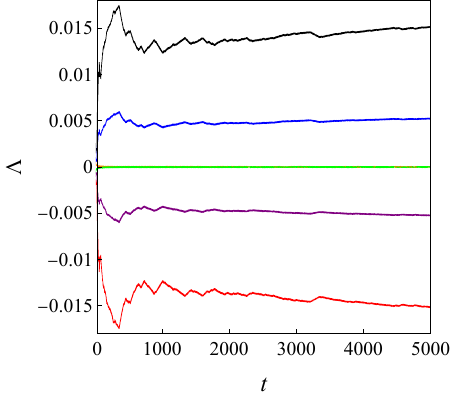}}
		\caption{
(Color online) The Lyapunov exponents spectra $\Lambda=\{\lambda_1,\ldots, \lambda_6\}$ of system~\eqref{eq:rhs}, computed for the constant values of parameters $\mu_1=3, \mu_2=0.2$, and initial condition~\eqref{eq:ini}. 
For a sufficient amount of time steps, the convergences of the Lyapunov exponents are ensured. 		\label{fig:parki0}}
			\end{figure*}

The technique of Lyapunov exponents is perhaps the most fundamental and well-known chaos indicator. It quantifies how rapidly nearby trajectories in phase space diverge exponentially over time. In an $n$-dimensional system of first-order differential equations, $n$ Lyapunov exponents correspond to each principal direction of motion in the phase space. The full set of exponents, known as the Lyapunov spectrum, offers a detailed picture of the system dynamics.  In chaos theory, the onset of chaos is indicated by the presence of one positive Lyapunov exponent. In contrast, hyperchaos is characterized by two or more positive Lyapunov exponents.
 
   Over the years, various algorithms for calculating the Lyapunov spectrum of continuous dynamical systems have been thoroughly developed~\cite{Wolf:85::,Rangarajan:98::,Carbonell:02::,Lu:05::,Chen:06::,Stachowiak:11::,Balcerzak:18::,Balcerzak:20::,Balcerzak:20b::}. Most of these methods build upon the widely recognized standard algorithm proposed by Benettin et al.~\cite{Benettin:80::}.  This technique is based on the integration of variational equations
for $n$ initial conditions periodic application of the Gram–Schmidt orthonormalization procedure.   The algorithm was straightforwardly implemented in  Mathematica by Sandri~\cite{Sandri:96::}. In our research, we use the standard algorithm by utilizing a solver with the fourth-order Runge--Kutta method\footnote{To be precise, we used the built-in Mathematica  NDSolve solver.  We specified the solving method as "ExplicitRungeKutta" by setting "StiffnessTest" $\rightarrow$ True. We set the PrecisionGoal and AccuracyGoal to at least 11 digits while keeping the MaxStepSize as  0.01.}. 
We have chosen key parameters for our computations following an extensive and careful analysis. Specifically, we set the re-orthonormalization interval to $T=1$, a maximum step size of $\tau=0.01$, and typically run the integration for at least $k=5000$ steps.  The working precision for the entire numerical analysis is set to at least 12,
ensuring precision maintenance of up to 12 digits during internal computations. Additionally, we utilize the constancy of the energy first integral $E$ (as defined in~\eqref{eq:energy})  to verify the accuracy of our numerical integrations. The relative and absolute errors are kept within a tolerance of $10^{-11}$, ensuring high accuracy throughout the process.
These parameters enable the efficient computation of Lyapunov exponents, promoting their rapid convergence. We can confidently calculate the Lyapunov spectrum by maintaining high precision and small error margins, even for systems with chaotic dynamics.

Fig.~\ref{fig:parki0} shows three spectrums of Lyapunov exponents for system~\eqref{eq:rhs}, constructed for $\mu_1=3,\,  \mu_2=0.2$, under the initial conditions
\begin{equation}
\label{eq:ini}
\begin{split}
\ell_0=0.2,\quad \varphi_{10}\in \{0.25\pi,0.5\pi, 0.75\pi\},\qquad \varphi_{20}= v_0=0=\omega_{10}=\omega_{20}.
\end{split}
\end{equation}
As the studied system spans six-dimensional phase space, there are six Lyapunov exponents in the whole spectrum denoted as $\Lambda=\{\lambda_1,\ldots, \lambda_6\}$ where $\lambda_1$ is the largest Lyapunov exponent. However, in the context of the Hamiltonian systems and their symplectic structure, these exponents are not independent, i.e., they obey the pairing rule. Hence if $\lambda$ is a Lyapunov exponent, then $-\lambda$ is also, which indicates   $\sum \lambda_i=0$. This is related to the time-reversibility of the system and Liouville’s
theorem [99]  concerning the volume-preserving of the phase space in conservative Hamiltonian systems. As we can notice, the integration time of 5000 units was sufficient to ensure the convergence of the
Lyapunov exponents. Moreover, due to the presence of the first integral, which is energy conservation, two Lyapunov exponents converge to zero. We consequently use this crucial property to estimate the sufficient convergence of the Lyapunov exponents spectrum. The above fact was missed in the paper~\cite{Li:18::}, where the authors did not sufficiently ensure the convergence of Lyapunov exponents during the study of a Hamiltonian system, which implied erroneous results. We later corrected this result in~\cite{Szuminski:20b::}. 
Thus, the possible Lyapunov exponents spectrum
of the
system~\eqref{eq:rhs}, can be of the form  $\Lambda=\{\lambda_1,\lambda_2,\lambda_3,-\lambda_3,-\lambda_2,-\lambda_1\}$, where $\lambda_3\approx 0$.

In Fig.~\ref{fig:parki0}, we observe that for a given initial condition~\eqref{eq:ini} with relatively small initial amplitudes, such as $\varphi_{10}=0.25\pi$ or $\varphi_{10}=0.5\pi$, the system dynamics is very regular and non-chaotic, as indicated by all lambdas tending to zero. However, for Hamiltonian systems, computations of Lyapunov exponents for a very small number of initial conditions may not provide a comprehensive view of the system’s dynamics and integrability. For example, in Fig.~\ref{fig:parki0}(c), we can see significantly different structures of the Lyapunov exponents in the visible spectrum compared to those computed for smaller initial amplitudes. It is evident that for a sufficiently large initial swing angle, $\varphi_{10}=0.75\pi$, the system's dynamics is hyperchaotic with two positive Lyapunov exponents.

\begin{figure*}[htp]
		\centering
					\subfigure[The initial condition:	$\ell_0=0.2,\, \varphi_{10}=0.25\pi, \quad \varphi_{20}= v_0=0=\omega_{10}=\omega_{20}$]{		
					\includegraphics[width=0.38\linewidth]{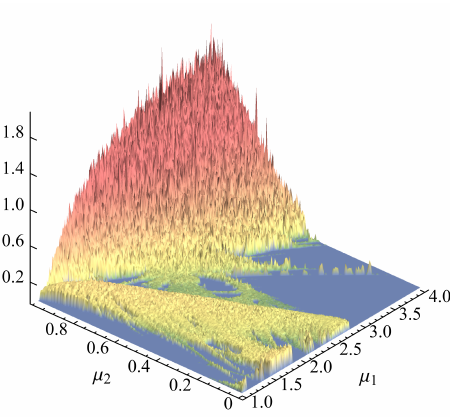}\hspace{-0.2cm}
		\includegraphics[width=0.31\linewidth]{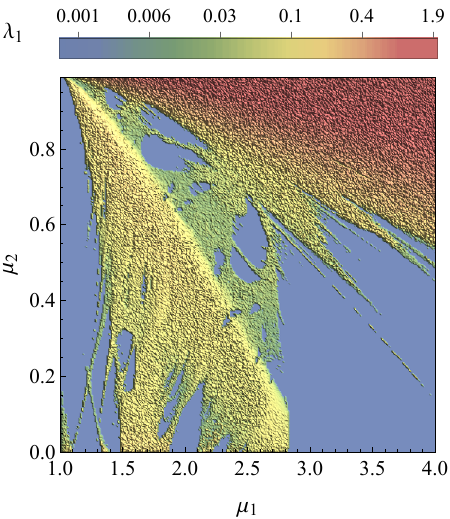}	 	\includegraphics[width=0.31\linewidth]{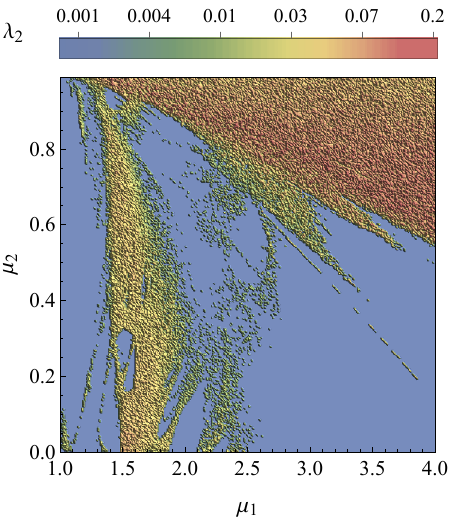}}		 
		\subfigure[The initial condition:		$\ell_0=0.2,\, \varphi_{10}=0.5\pi, \quad \varphi_{20}= v_0=0=\omega_{10}=\omega_{20}$]{		\includegraphics[width=0.38\linewidth]{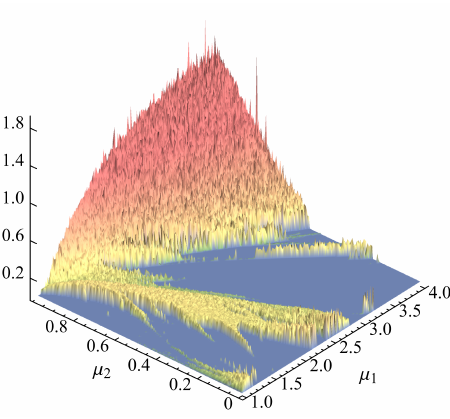}\hspace{-0.2cm}
		\includegraphics[width=0.31\linewidth]{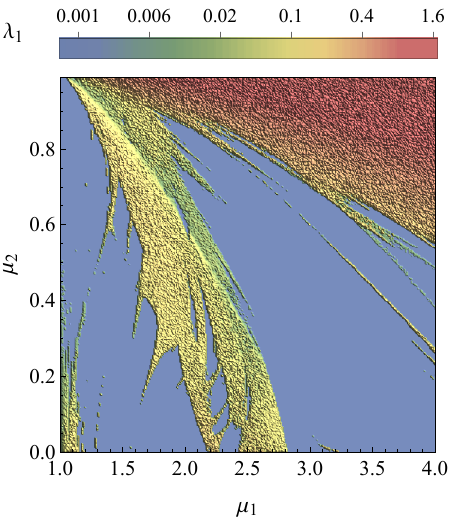}	 	\includegraphics[width=0.31\linewidth]{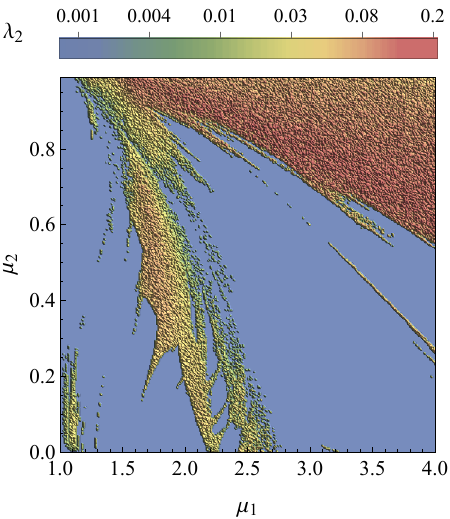}		 	}
		\subfigure[The initial condition:	$\ell_0=0.2,\, \varphi_{10}=0.75\pi, \quad \varphi_{20}= v_0=0=\omega_{10}=\omega_{20}$]{
		\includegraphics[width=0.38\linewidth]{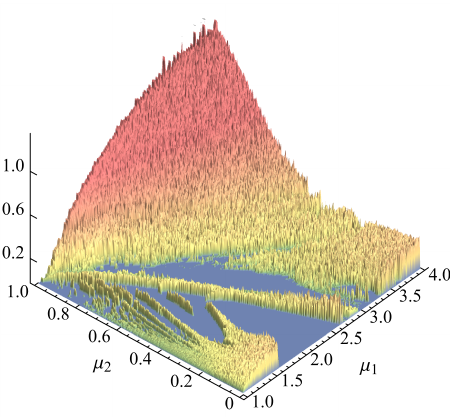}\hspace{-0.2cm}
		\includegraphics[width=0.31\linewidth]{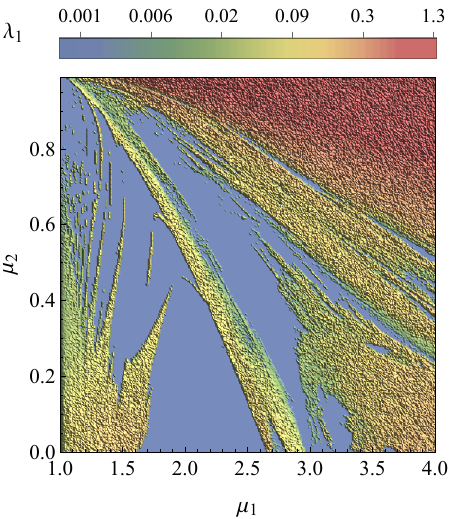}	 	\includegraphics[width=0.31\linewidth]{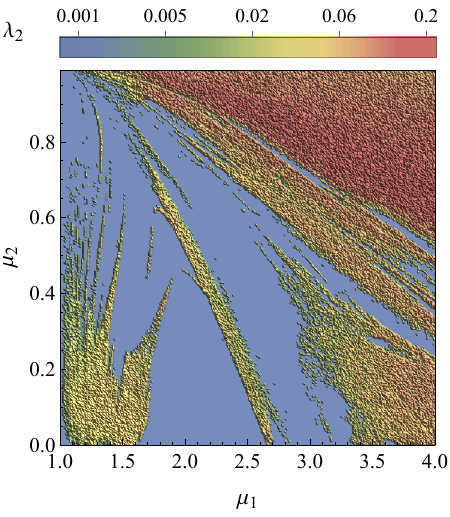}		 			}			\caption{ (Color online) Three-dimensional Lyapunov's exponents' diagrams of the system~\eqref{eq:rhs} depicted in $(\mu_1,\mu_2,\lambda)$-space, and the projections of $\lambda_1$ and $\lambda_2$ onto $(\mu_1,\mu_2)$-plane. The colorful diagrams were obtained by numerical integrations of Lyapunov's exponent's spectra on a grid of $400\times 400$ values of the parameters $(\mu_1,\mu_2)$, under the respective initial conditions {eq:ini}. The color scales are logarithmic, corresponding to the magnitudes of $\lambda_1$ and $\lambda_2$, respectively.  \label{fig:parki}}
				
					\end{figure*}							
We can construct the Lyapunov exponents diagram in a parameters plane by iterating the above procedure for different parameter values or initial conditions. Fig.~\ref{fig:parki} show a three-dimensional diagrams of Lyapunov exponents~$(\lambda_1,\lambda_2,\lambda_3)$ as a function of the parameters $(\mu_1, \mu_2)$.  On the rights, the projections of $\lambda_1$ and $\lambda_2$ onto the plane $(\mu_1, \mu_2)\in \R^{2}_+$ is given. Due to visualization purposes, the color scale is logarithmic, defined by the formula: $\text{scalling}(x)=\frac{\log(x/\text{min})}{\log(\text{max})/\log(\text{min})}$, where $\text{min}$ and $\text{max}$ represent the minimum and maximum values of the Lyapunov exponent, respectively. The above enables the detection of the regions with weak chaotic properties where the   Lyapunov exponents are relatively small, which could be missed using the standard linear scaling. In the studied system, chaos typically occurs when the Lyapunov exponent exceeds $0.003$. These colorful diagrams, as depicted in  Fig.~\ref{fig:parki}, were obtained by numerical computations of Lyapunov exponents $\lambda_1$ and $\lambda_2$ on a uniform grid of $400\times 400$ values of the parameters $\mu_1\in(1,4)$ and $\mu_2\in (0,1)$ with the initial conditions~\eqref{eq:ini}. Thus, the presented diagrams show how the change in parameter values affects the system's dynamics. The blue areas correspond to regular, non-chaotic oscillations of the pendulums. At the same time, the rest of the domain is responsible for the chaotic motion with two positive Lyapunov exponents. Indeed, the Lyapunov diagrams of exponents $\lambda_1$ and $\lambda_2$ mostly coincide confirming the
hyperchaotic nature of the studied model.

Let us focus mainly on the first diagram in Fig.~\ref{fig:parki}(a), in which chaos occupies about $60\%$ of the available area. From this diagram, 
we can easily estimate how the strength of chaos and, in fact, hyperchaos varies with the changes of the parameters $(\mu_1,\mu_2)$.  For $\mu_1=1$, the system dynamics are regular, and all trajectories are terminating - the pendulums are sliding down with low amplitudes of oscillations. It is the well-established property of the swinging Atwood machine, where the counterweight mass is less or equal to the oscillating masses~\cite{Tufillaro:85::,Tufillaro:88a,Tufillaro:95::}.  
 Moreover, we can observe a narrow rectangle $(1,1.05)\times (0,1)$, which also corresponds to values of the parameters for which the system's motion is regular with quasi-periodic or periodic orbits. The detailed distinction between these two types of motion will be provided in the next subsection. We know that for $\mu_2=0$, the system reduced to the classical SAM model of two degrees of freedom. Thus, we may expect the regular behavior of the system under the perturbation $\mu_2\approx 0$. Our suspicion is generally not confirmed. Instead of that, we note that for $\mu_2\approx 0$, there is a vast range of values of $\mu_1$ for which the system's motion is complicated and hyperchaotic.
 Moreover, we can observe many regular gaps (seas) between chaotic layers. What is important is to note that for $\mu_1=3$, the double-pendulum Atwood machine is generally not integrable. Both lambdas are positive for a wide range of $\mu_2$, for instance, $\mu_2\in(0.5,1)$, which precludes the integrability and the presence of the additional first integral inside the system~\eqref{eq:rhs}. The last but not least important observation from the studied Lyapunov exponent diagram is the fact that in the chaotic domains of the diagram, the strength of chaos is proportionally increasing with the increasing values of $\mu_2$ as well. This is because the presence of the second pendulum greatly impacts the system dynamics. Indeed, the largest Lyapunov exponent reaches its maximum value $\lambda_1\approx 2.1$  at $\mu_2 \to 1$.

Figs.~\ref{fig:parki}(b)-\ref{fig:parki}(c) show the additional Lyapunov diagrams with increasing values of the initial swing angle $\varphi_{10}=0.5\pi$ and $\varphi_{10}=0.75\pi$. Although these diagrams differ from the one given in Fig.~\ref{fig:parki}(a), they possess some similarities in the general structures. However, it is surprising that for a higher value of the initial amplitude 
$\varphi_{10}=0.5\pi$, the chaos occupies smaller percentege ($41 \%$) of the area of the parameter plane $(\mu_1,\mu_2)$.   Moreover, the small regular rectangle $(1,1.05)\times (0,1)$ visible in  Fig.~\ref{fig:parki}(a) now mostly decays, and we observe the appearance of chaos within this region. This fact is even better visible when we increase the value of $\varphi_{10}$, e.g.,  up to 
$\varphi_{10}=0.75\pi$. Nevertheless, despite the large value of the initial swing amplitude, we can detect several values of $(\mu_1,\mu_2)$ for which system motion is non-chaotic. For instance, in the studied Lyapunov diagrams, we can notice that for $\mu_1=3$ and $\mu_2\approx 0$, the system dynamics is very regular. In this scenario, the studied system can be treated as the integrable swinging Atwood machine model with a small perturbation coming from the oscillations of the second pendulum with a small value of the mass $m_2$. A detailed study of this situation will be given in the next subsection. 
\begin{figure}[t]
		\centering
		\subfigure[The mass ratios: $\mu_1=3,\, \mu_2=2\cdot 10^{-4}$]{
		\includegraphics[width=0.44\linewidth]{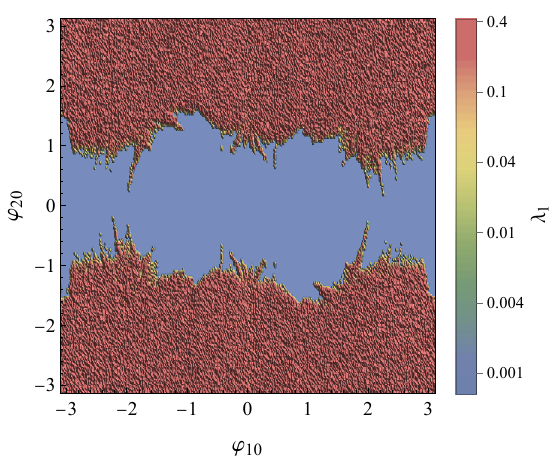}} \hspace{0.4cm}
		\subfigure[The mass ratios: $\mu_1=3,\,\mu_2=0.2$]{
		\includegraphics[width=0.44\linewidth]{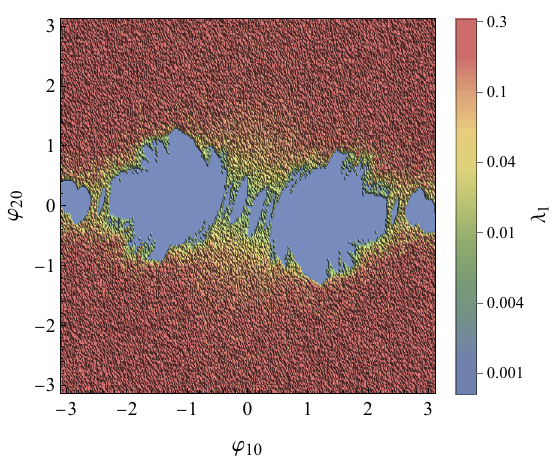}}	\\ \vspace{-5pt}
		\subfigure[The mass ratios: $\mu_1=3,\,\mu_2=0.5$]{
		\includegraphics[width=0.44\linewidth]{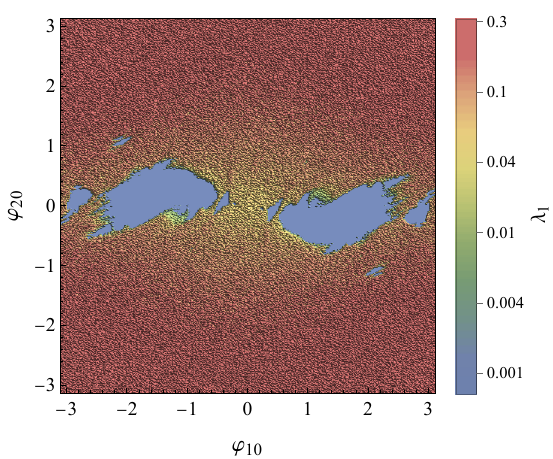}} \hspace{0.4cm}
		\subfigure[The mass ratios: $\mu_1=3,\,\mu_2=0.9$]{	\includegraphics[width=0.43\linewidth]{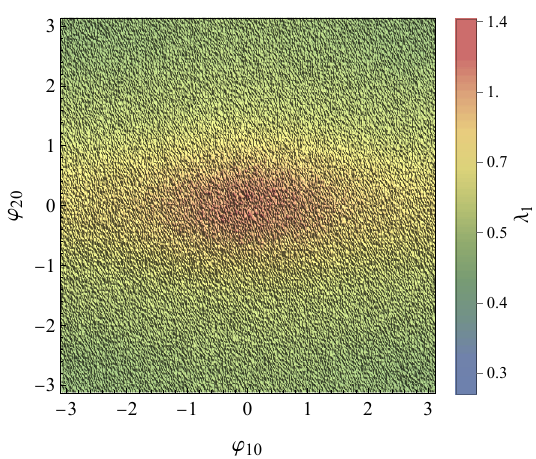} }
					\caption{The largest Lyapunov exponent in the plane of the initial swing angles $(\varphi_{10},\varphi_{20})$, constructed for $\mu_1=3$ with varying $\mu_2$. The numerical
integrations were performed successively for a uniform grid of $400\times 400$ values of $(\varphi_{10},\varphi_{20})\in(-\pi,\pi)$, with $\ell_0=0.2$ and zero initial velocities. The color scale is logarithmic, corresponding to the magnitude of $\lambda_1$. The plots visualize two zones: regular and chaotic. Blue regions with $\lambda_1\approx 0$ indicate regular dynamics, while the rest of the domain is responsible for the system’s chaotic behavior.  }
		\label{fig:katy}
	\end{figure}

The computation of Lyapunov exponents diagrams can provide an important insight into the system's dynamics by plotting the largest Lyapunov exponent as a function of the initial conditions of state variables. Therefore, it allows for both a qualitative and quantitative description of chaos.
In Fig.~\ref{fig:katy}, we show the Lyapunov exponents diagrams on the plane of initial swing angles $(\varphi_{10},\varphi_{20})$, constructed for 
\begin{equation}
\label{eq:parki2}
\mu_1=3,\qquad \mu_2\in\{2\cdot 10^{-4},0.2,0.5, 0.9\},
\end{equation}
with the initial conditions
 \begin{equation}
 \label{eq:init}
 \ell_0=\frac{1}{5},\quad \varphi_{10}\in(-\pi,\pi),\quad \varphi_{20}\in(-\pi,\pi),\quad \quad v_0=\omega_{10}=\omega_{20}=0,
 \end{equation}
where $\varphi_{10}$ and $\varphi_{20}$ are the control parameters. These colorful diagrams were obtained by the successive computations of the largest Lyapunov exponent $\lambda_1$ for a uniform grid of $400\times 400$ values of the control parameters $(\varphi_{10},\varphi_{20})$.  The diagrams depicted in Fig.~\ref{fig:katy} show how the change in the initial swing angles of the pendulums with zero initial velocities affects the system's dynamics with varying parameter values $\mu_2$. As expected, the diagrams possess reflection symmetry.   The plots visualize two behaviors: regular and chaotic. Blue regions with $\lambda_1\approx 0$ indicate regular dynamics, while the rest of the domains are responsible for the system’s chaotic motion. In Fig.\ref{fig:katy}(a), we present the Lyapunov diagram computed for an infinitesimally small mass ratio of $\mu_2 = 2 \cdot 10^{-4}$. This configuration arises from a slight perturbation of the integrable SAM model, achieved by attaching a second pendulum with mass $m_2$, which is five orders of magnitude smaller than the mass $m_1$, to its end. As expected, for sufficiently small initial amplitudes of the second pendulum $\varphi_{20}$, the system exhibits regular (non-chaotic) oscillations for all initial angles $\varphi_{10}$ of the first pendulum. However, for larger values of the initial swing angle, specifically $\varphi_{20} > 0.25\pi$, chaotic motion emerges for some values of $\varphi_{10}$, and $\varphi_{20} > 0.5\pi$, chaos manifests for every $\varphi_{10}$. When the mass ratio is increased to $\mu_2 = 0.2$, the region corresponding to chaotic motion expands, as shown in Fig.\ref{fig:katy}(b).
Now, the chaos occupies almost $80\%$ of the admissible region. Surprisingly, in the central part of the diagram, where 
values of $(\varphi_{10},\varphi_{20})$ correspond to small initial amplitudes of the pendulums, the regular area divergence, and we observe the appearance of weak chaotic behavior of the system.
Moreover, we can notice that during the computations of the Lyapunov exponents spectrum, as evidenced in Fig.~\ref{fig:parki0}, we did not catch the proper initial condition for the system's motion to be chaotic. It is clear that the values $(\varphi_{10},\varphi_{20})\in \{(0,0.25\pi),(0,0.5\pi)\}$ fall within the regular region of the Lyapunov exponents diagram, as depicted in Fig.~\ref{fig:katy}(b). Furthermore, increasing the mass ratio to $\mu_2=0.5$, indicating that the pendulum masses are equal, results in a more intricate Lyapunov exponent diagram with small regular gaps between chaotic layers. Ultimately, for $\mu_2=0.9$, where the mass of the second pendulum is nine times greater than the first one, the system dynamics become entirely ergodic, as shown in Fig.~\ref{fig:katy}(d),  with the largest Lyapunov exponent reaching a maximum value of $\lambda_1=1.45$.

		\begin{figure}[t]
		\centering
		\subfigure[The mass ratios: $\mu_1=3,\,\mu_2=2\cdot 10^{-4}$]{
		\includegraphics[width=0.47\linewidth]{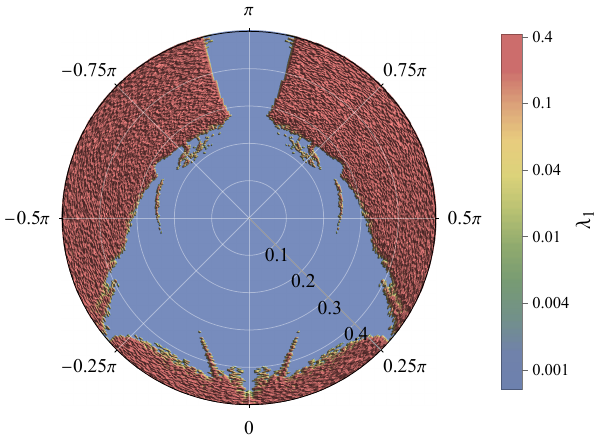}  } \hspace{5pt}
		\subfigure[The mass ratios: $\mu_1=3,\,\mu_2=0.2$]{
		\includegraphics[width=0.47\linewidth]{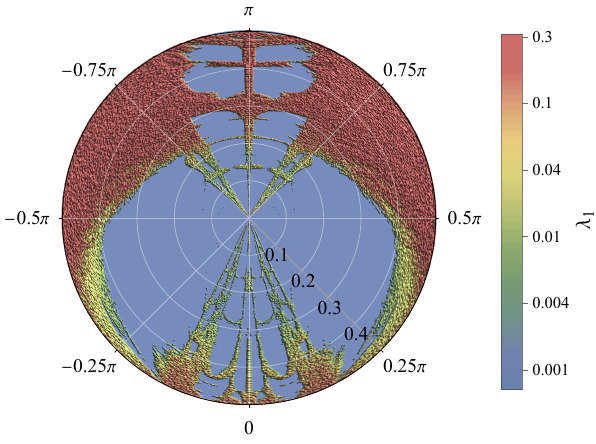}	} \\
		\subfigure[The mass ratios: $\mu_1=3,\,\mu_2=0.5$]{\includegraphics[width=0.47\linewidth]{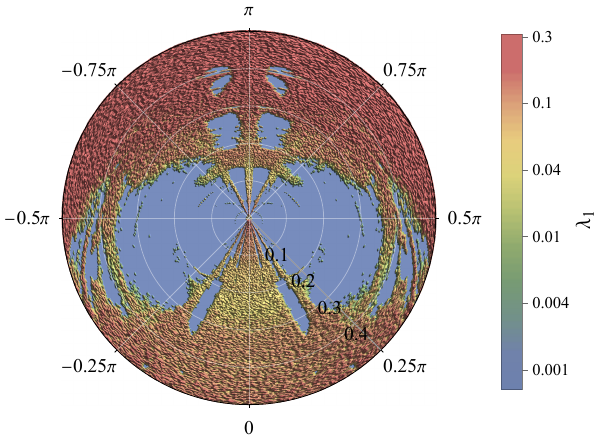}  }\hspace{5pt}
\subfigure[The mass ratios: $\mu_1=3,\,\mu_2=0.9$]{
	\includegraphics[width=0.47\linewidth]{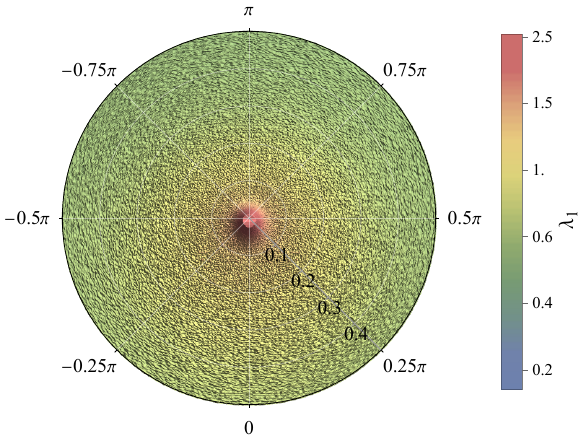}	 		}					 \caption{(Color online)  The Lyapunov diagrams for system~\eqref{eq:rhs} in the polar plane $(\ell_0, \varphi_{10})$ with $\varphi_2=0$ are constructed for the initial conditions~\eqref{eq:ini_2}, with varying values of $\mu_2$. In the radial direction, we measure $\ell_0 \in (0, 0.5]$, and in the angular direction, we measure $\varphi_{10} \in (-\pi, \pi)$. The color scale is logarithmic, corresponding to the magnitude of $\lambda_1$. The plots reveal two zones: regular and chaotic. Blue regions indicate regular dynamics, while regions with $\lambda_1 > 0$ correspond to the system’s hyperchaotic behavior. \label{fig:lyap_polar2}}
	\end{figure}
	
	\begin{figure}\centering
	\subfigure[The initial length: $\ell_0=0.2$]{
	\includegraphics[width=0.45\linewidth]{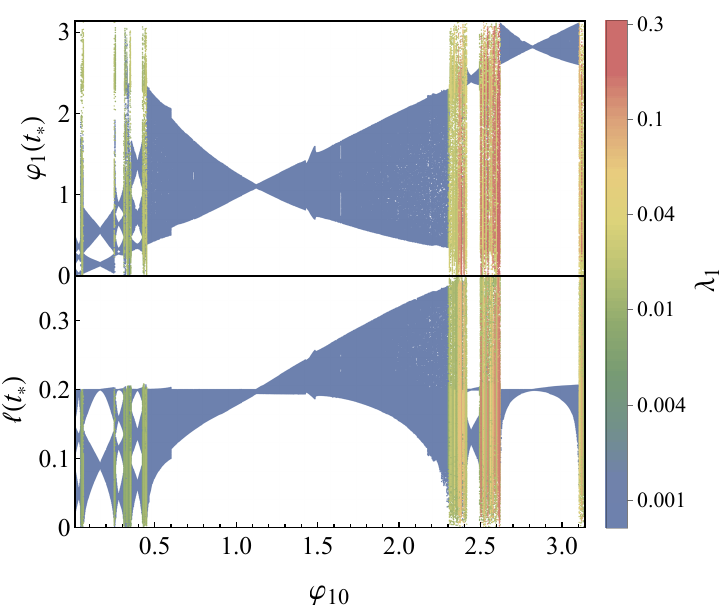}}
	\subfigure[The initial swing angle: $\varphi_{10}=\frac{\pi}{8}$]{
	\includegraphics[width=0.45\linewidth]{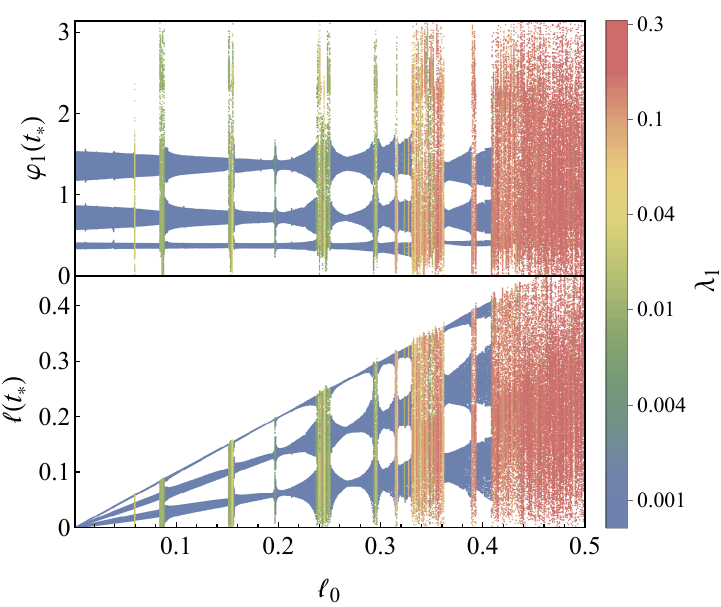} }\caption{(Color online)  Phase-parametric diagrams of system~\eqref{eq:rhs} versus the initial swing angle $\varphi_{10}\in(0,\pi)$ and the ratio of the initial lengths of the pendulums $\ell_0\in (0,0.5]$. Initial conditions and values of the parameters are taken from Fig.~\ref{fig:lyap_polar2}(b). There are two cases to consider: a) we move in the angular direction of the Lyapunov diagram~\ref{fig:lyap_polar2}(b)  and $\varphi_{20}\in(0,\pi)$; b) we choose the initial swing angle to be $\varphi_{20}=\pi/8$ and we move in the radial direction of the Lyapunov diagram for $\ell_0\in(0,0.5]$. Here $\varphi_1(t_\star)$ and $\ell(t_\star)$ denotes values of the state variables$\varphi_1,\ell$ when the first pendulum reaches its maximal values, i.e. when $\omega_1(t_\star)=0$ and $\omega_1'(t_\star)<0$ for some $t_\star$.
The diagram is combined with the largest Lyapunov exponent, $\lambda_1$, and the color scale is logarithmic, corresponding to the magnitude of $\lambda_1$. A very good agreement between the phase-parametric diagram and the Lyapunov diagram (see Fig.~\ref{fig:lyap_polar2}(a)) is observed. We observe the coexistence of periodic, quasi-periodic, and chaotic orbits, along with periodic windows between chaotic layers. Exemplary periodic, quasi-periodic, and chaotic orbits are plotted in Fig.~\ref{fig:periodicki0}.}
	\label{fig:bifek2}
\end{figure}
\begin{figure*}[t]	\centering
	\subfigure[$\varphi_{10}=1.15$, periodic trajectory]{
	\includegraphics[width=0.32\linewidth]{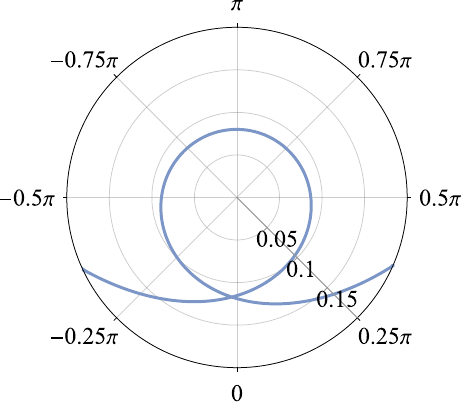}}
		\subfigure[$\varphi_{10}=0.5\pi$, quasi-periodic trajectory]{
	\includegraphics[width=0.32\linewidth]{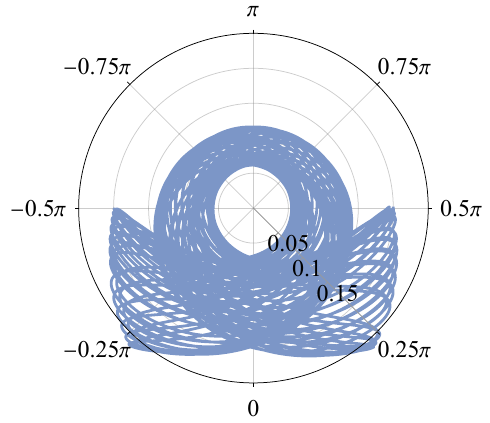}}
	\subfigure[$\varphi_{10}=0.75\pi$, chaotic trajectory]
{
	\includegraphics[width=0.32\linewidth]{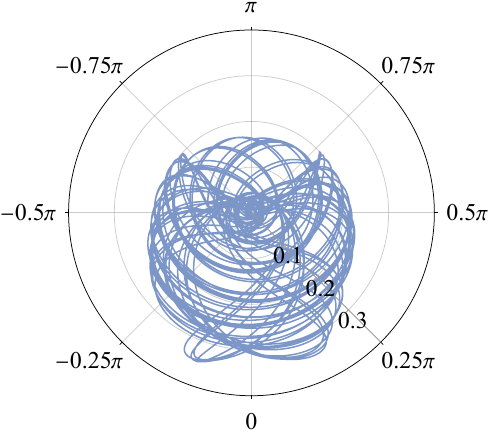}}
	\caption{(Color online)  Polar plots of exemplary periodic, quasi-periodic, and chaotic trajectories of the variable-length pendulum. Values of the parameters and the initial conditions were taken from the phase-parametric diagram presented in Fig.~\ref{fig:bifek2}(a). In the radial direction, we measure $\ell(t)$; in the angular direction, we measure $\varphi_1(t)$.}
	\label{fig:periodicki0}
\end{figure*}

\begin{figure}[t]
	\centering
	\includegraphics[width=0.99\linewidth]{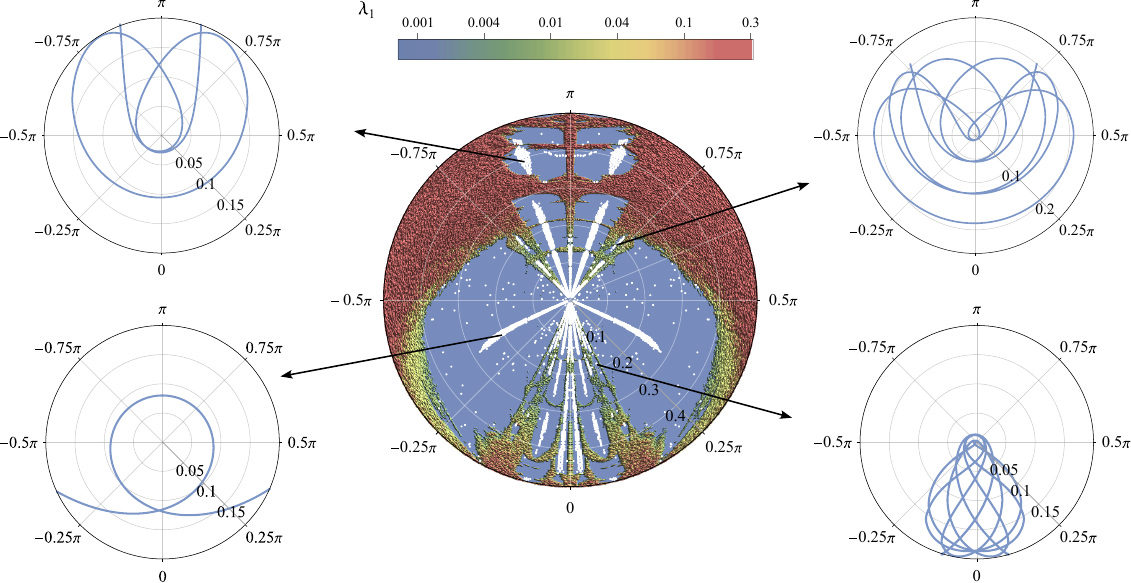}
	\caption{(Color online)  The  Lyapunov refined map formed by the Lyapunov diagram
	 presented in Fig.~\ref{fig:lyap_polar2}(b) with depicted points corresponding to values of $(\ell_0,\varphi_{10})$ for which motion of the variable-length pendulum is periodic. 	 	The polar plots computed for distinct initial conditions, taken from different periodic clusters of the map, are given.    In the radial direction, we measure $\ell(t)$; in the angular direction, we measure $\varphi_1(t)$. The different structures of the trajectories, as well as their periodicity, are visible. \label{fig:bif_pik}}
\end{figure}

\begin{figure*}[t]	\centering
\subfigure[$\ell_0=0.33, \, \varphi_{10}=0.41, \, \varphi_{20}=0$]{
		\includegraphics[width=0.32\linewidth]{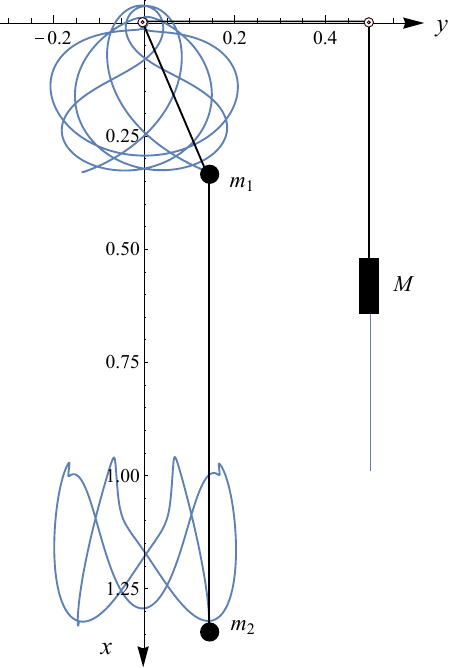}	}\subfigure[$\ell_0=0.22, \, \varphi_{10}=1.13, \, \varphi_{20}=0$]{\includegraphics[width=0.33\linewidth]{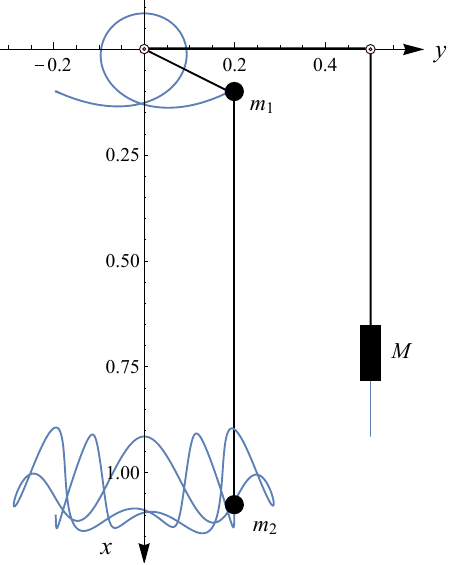}}
		\subfigure[$\ell_0=0.33, \, \varphi_{10}=2.84, \, \varphi_{20}=0$]{
			\includegraphics[width=0.32\linewidth]{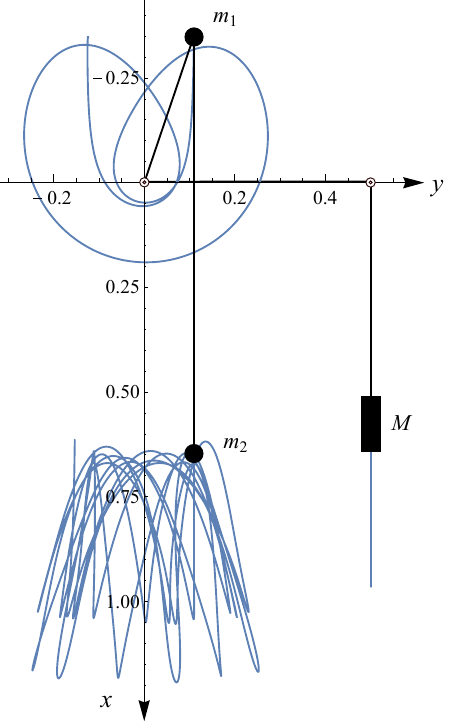}}		\caption{
(Color online)  The periodic behavior of the whole system with depicted trajectories plotted in the Cartesian plane. 
	\label{fig:periodic}}
			\end{figure*}

In Fig.~\ref{fig:lyap_polar2}, we present the polar plots of Lyapunov exponents diagrams of system~\eqref{eq:rhs} computed for the same  values of the parameters~\eqref{eq:parki2} as in Fig.~\ref{fig:katy}, but choosing the different set of initial conditions
\begin{equation}
\label{eq:ini_2}
\begin{split}
\ell_0\in(0, 0.5),\quad \varphi_{10}\in (0,\pi),\qquad \varphi_{20}= v_0=0=\omega_{10}=\omega_{20}.
\end{split}
\end{equation}
Thus,    $\ell_0$ and $\varphi_{10}$ are now the control initial state variables, indicating how the change of the initial length of the variable-length pendulum and its initial swing angle affects the system's dynamics.
 The colorful diagrams shown in Fig.~\ref{fig:lyap_polar2} were created by numerically computing Lyapunov’s exponents on a grid of $400\times 400$ values of $(\ell_0,\varphi_{10})$. These values were then plotted in the $(x,y)$-Cartesian plane. The radial direction measures the rescaled initial length of the pendulum, $\ell_0$, while the angular direction represents the values of its initial amplitude $\varphi_{10}$. The Lyapunov diagrams visualize several zones: regular and complex ones. Blue regions indicate regular dynamics with periodic and quasi-periodic solutions, while regions with $\lambda_1 > 0$ correspond to the system’s chaotic behavior. It is observed that for small $\mu_2=2\cdot 10^{-4}$ and $\ell_0<0.2$, the system exhibits regular (non-chaotic) oscillations for every initial swing angle $\varphi_{10}\in(-\pi,\pi)$. However, as $\ell_0$ is increased, the first signs of chaotic motion occur around the point $(\ell_0,\varphi_{10})=(0.25,0.5\pi)$. It is then noticed that as $\ell_0$ increases, the necessary value of the initial swing angle $\varphi_{10}$ for chaos to appear decreases, as expected. Interestingly, there is a large, regular gap between chaotic layers with $|\varphi_{10}|> 0.9\pi$, where chaos is not observable even for large initial $\ell_0$. This behavior is somewhat unexpected and unique in the context of Hamiltonian pendulum systems, as such large energy values typically lead to chaotic dynamics throughout the system.
 
  The remaining polar plots of Lyapunov exponent diagrams, as depicted in Fig.~\ref{fig:lyap_polar2}(b)-(d), show the relation between initial values $(\ell_0,\varphi_{10})$ with further increasing $\mu_2$, so the impact to the dynamics of the system coming from the second pendulum becomes noticeable. As we could expect, for $\mu_2=0.2$, the situation is far different compared to Fig.~\ref{fig:lyap_polar2}(a). Now, we can observe the appearance of chaos for initial swing amplitudes close to zero. Indeed, there are some values of $\varphi_{10}$, where, moving in the radial direction, all visible values of $\ell_0$ are responsible for the chaotic motion, e.g. $\varphi_{10}=0.1\pi$. Naturally, when $\ell_0$ increases, the strength of chaos also increases. For $\ell_0\approx 0.5$, the system's motion is chaotic for every $\varphi_1\in(-\pi,\pi)$. Moreover, the regular gap with $|\varphi_{10}|>0.9 \pi$ visible in the previous diagram now splits into the smaller, connected domains. When moving in the radial direction in this case, we observe five regular seas intersected by narrow chaotic layers. For a higher value of the mass ratio $\mu_2=0.5$, which implies that the masses $m_1$ and $m_2$ of the swinging pendulums are equal,
   the chaos occupies a larger area of the  Lyapunov diagram. At the same time, for $\mu_2=0.9$, there are no signs of regular oscillations of the pendulums, as the entire permissible region in the diagram exhibits strong chaotic behavior, where the largest Lyapunov exponent reaches its maximal value, up to $\lambda_1=2.51$.

\subsection{Phase-parametric diagrams}
The method of Lyapunov exponents is a fundamental tool in the study of dynamical systems, providing a quantitative description of chaos and its strength. Indeed, this method enables us to detect the coexistence between the chaotic and regular motions, as evidenced in the previously discussed Figs.~\ref{fig:parki}-\ref{fig:lyap_polar2}. However,  in the framework of Hamiltonian systems, it has limitations, particularly when it comes to distinguishing periodic motion from quasi-periodic motion. More technically, the method of Lyapunov exponents does not capture the topological differences between periodic and quasi-periodic solutions.
For instance, in the blue regions of the studied Lyapunov diagrams, when $\lambda_1\approx 0$, we cannot distinguish these two types of motion.  	The above leads to incomplete information about the system’s phase space structure in these cases.  This weakness highlights the need for supplementary techniques to fully understand the nature of regular motion in Hamiltonian dynamics. As has already been shown in our previous works~\cite{Szuminski:23::,Szuminski:24::}, the computation of the so-called phase-parametric diagrams is effective for these purposes.
The phase-parametric diagram effectively illustrates how a system's qualitative behavior evolves as a certain control parameter is varied. 
For a given Hamiltonian system, bifurcation points mark critical values of parameters where the system undergoes a qualitative shift in its dynamics. These changes can include the transition from regular to chaotic motion, frequently called the route to chaos; detection of periodic orbits, their numbers, and stability; regular gaps (windows) between chaotic regions in chaos or symmetry breaking~\cite{Stachowiak:06::,Stachowiak:15::}.
	
Fig.~\ref{fig:bifek2} illustrates the phase-parametric diagrams of the system computed for two one-parameter families of initial conditions taken from Fig.~\ref{fig:lyap_polar2}(b). In Fig.~\ref{fig:bifek2}(a), we move in the angular direction of the Lyapunov diagram with $\ell_0=0.2$ and $\varphi_{10}\in(0,\pi)$, while  Fig.~\ref{fig:bifek2}(b) corresponds to the case that we fix $\varphi_{10}=0.125\pi$, and we move in the radial direction of the Lyapunov diagram over the range $\ell_0\in(0,0.5]$. In the studied phase-parametric diagrams, we display the dependence of $\varphi_1(t)$ and $\ell(t)$, when the first pendulum reaches its maximal values (amplitudes), with the change of the control parameters $\varphi_{10}$ and $\ell_0$.  That is, for a given initial condition, we numerically integrate the equations of motion~\eqref{eq:rhs} and
build the diagrams by collecting points $\varphi_1(t_\star)$ and $\ell(t_\star)$
repeatedly, when $\omega_1(t_\star)=\dot \varphi_1(t_\star)=0$  and $\omega_1'(t_\star)<0$, for a certain $t_\star$. 
Moreover, to improve the analysis, we combine the calculated phase-parametric diagrams with the largest Lyapunov exponent previously computed in Fig.~\ref{fig:lyap_polar2}(b). First, we notice the excellent correspondence of the phase-parametric diagrams with $\lambda_1$ as evidenced in Fig.~\ref{fig:bifek2}. The blue regions of the phase-parametric diagram are responsible for regular oscillations, while the rest of the area is chaotic. 
Although Lyapunov exponent diagrams do not reveal it within regular regimes, we can now distinguish periodic orbits from quasi-periodic ones. For better understanding, we provide in Fig.~\ref{fig:periodicki0} three polar plots of periodic solutions, where the initial values of $\varphi_{10}$ were taken from the phase-parametric diagram illustrated in Fig.~\ref{fig:bifek2}(a). 
%It is worth mentioning that when we speak about periodic solutions, we refer to the first pendulum's regular oscillations. In a double pendulum system, if the first pendulum swings back and forth periodically, the second pendulum may show quasi-periodic behavior.

As we can notice, the phase-parametric diagram in~Fig.~\ref{fig:bifek2}(a) starts with small, very close to periodic oscillations of the first pendulum near the equilibrium. However, this solution seems to be unstable because in the right neighborhood of this point on the diagram, to be precise at $\varphi_{10}=0.05$, we observe the large, narrow pick of the chaotic behavior with $\lambda_1\approx 0.01$, which is quite surprising since the initial swing amplitude is very small. Furthermore, we observe the emergence of classical periodic windows between chaotic layers. Next, we identify a large regular gap over the range $\varphi_{10} \in [0.46, 2.3]$, with only one periodic solution, which further diverges into global chaos, where the maxima of $\varphi_1$ densely occupy the large admissible region of the diagram.

Fig.~\ref{fig:bifek2}(b) presents the second pair of the phase-parametric diagrams with $\ell_0\in(0,0.5]$ treated as the control parameter. Hence, these diagrams correspond to the situation in that we fix  $\varphi_{10}=0.125\pi$, and we move the radial direction of the Lyapunov exponents diagram depicted in Fig.~\ref{fig:lyap_polar2}(b). One can observe the complex coexistence of periodic (or close to periodic) and chaotic behavior of the system visible in terms of regular gaps between the chaotic layers. As $\ell_0$ increases, the complexity and strength of chaos also increase. There are no visible regular regions for $\ell_0\in[0.405,0.5]$. Besides, it's important to note that further increasing of $\ell_0$ (not shown) leads to strong chaotic dynamics, as both the phase-parametric diagram and the Lyapunov exponents diagram become completely chaotic.

\subsection{The Lyapunov refined map \label{sec:refined}}
As we have already seen, there is no single universal tool for the numerical analysis of Hamiltonian systems. The method of Lyapunov exponents is used as an indicator of chaotic dynamics. At the same time, it does not distinguish values of the parameters for which the system's motion is periodic or quasi-periodic. For this purpose, we used the phase-parametric diagrams, which give qualitative information about the existence of periodic, or close to period one, solutions and the frequency of oscillations. Therefore, let us combine these two methods more systematically to create one global picture of the system's dynamics. We show the general procedure for the parameters corresponding to the previously studied  Lyapunov
diagram depicted in Fig.~\ref{fig:lyap_polar2}(b). 
This diagram was computed for a grid of $400\times 400$ values of $(\ell_0,\varphi_{10})$. From this list, we delete these pairs  $(\ell_0,\varphi_{10})$  such that $\lambda_1>0$. Thus, we receive a smaller grid of (let's say) $n$ pairs of initial conditions for which the system's motion is not chaotic. Next, for each element of this grid, we numerically integrate equations of motion, and we collect the points $(\ell(t_\star),\varphi_1(t_\star))$, when $\omega_1(t_\star)=0$ and $\omega_1'(t_\star)<0$, for a wide range of $t$. As the result, we obtain $n$ lists of multiple elements containing points $(\ell(t_\star),\varphi_1(t_\star))$. Then, we make the similarity check. Namely, in
each list, we look, with proper accuracy, for a pattern of repeating values of $(\ell(t_\star),\varphi_1(t_\star))$. If the motion is periodic and very close to periodic one, then we obtain a certain cyclical pattern. The above procedure can be effectively implemented in Mathematica. In this way, the rough but effective distinction between periodic from ordinary quasi-periodic solutions is possible.  The results of the computations are given in the diagram, as illustrated in Fig.~\ref{fig:bif_pik}, which we called the \textit{Lyapunov refined map}. Here, the white dots correspond to distinct pairs of $(\ell_0,\varphi_{10})$ for which the motion of the variable-length pendulum is periodic or very close to periodic ones. The polar plots computed for distinct initial conditions are depicted for better understanding. As we can notice in the Lyapunov refined map, the white dots arrange clusters of myriad periodic solutions, forming resonance-like curves. From the corresponding polar plots, we can identify that each cluster of periodic solutions corresponds to different structures of solutions (trajectories) and their frequency rations.  Thus, the computed Lyapunov refined map helps us to understand the system dynamics and identify intervals of the control parameters $\varphi_{10}$ and $\ell_0$ for which the variable-length pendulum motion is chaotic, quasi-periodic, or periodic with certain frequency ratios. 
Therefore,  it gives us the most qualitative and comprehensive view of the system dynamics.  

Let us emphasize that the Lyapunov-refined map, shown in Fig.~\ref{fig:bif_pik}, illustrates the periodic solutions for a variable-length pendulum with an initial swing angle $\varphi_{10}$ and initial length $\ell_0$. However, for our purposes, we have not specified whether the periodic solutions of the first pendulum imply periodicity in the second pendulum. In general, it seems that this does not have to be the case. Nevertheless, from our detailed analysis (not shown), we deduce that certain white dots in the Lyapunov-refined map indeed correspond to the periodic behavior of the entire system. In Fig.~\ref{fig:periodic}, the periodic behavior of the complete system is depicted, with trajectories of specific frequency ratios plotted in the Cartesian plane. The positions of the point masses $m_1$, $m_2$, and $M$ are shown for initial conditions $(\ell_0, \varphi_{10})$ selected from Fig.~\ref{fig:bif_pik}, under which the motion is  periodic.

 		\begin{figure*}[t]
		\centering
		\subfigure[The mass rations: $ \mu_1=3,\, \mu_2=2\cdot 10^{-4}$. Regular domain with non-chaotic dynamics]{
						\includegraphics[width=0.4\linewidth]{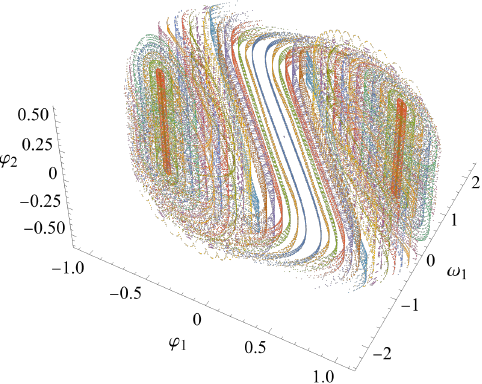}\hspace{1.5cm}\includegraphics[width=0.35\linewidth]{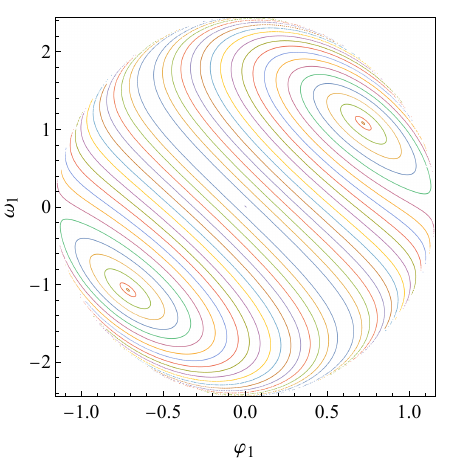}} \\
							\subfigure[The mass rations: $ \mu_1=3,\, \mu_2=0.2$. The complex domain with non-distinguishable dynamics]{
			\includegraphics[width=0.4\linewidth]{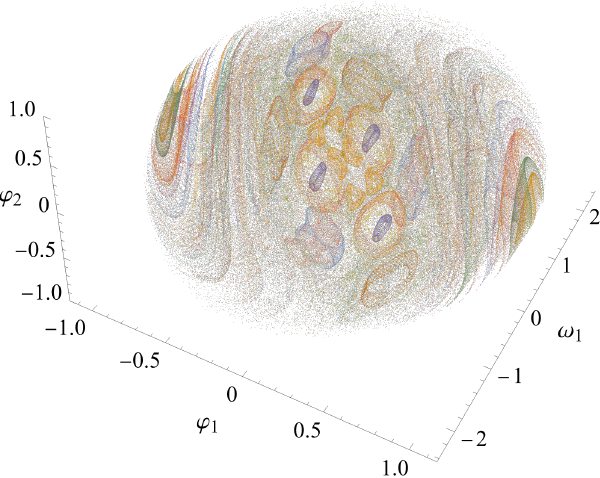}\hspace{1.5cm}\includegraphics[width=0.35\linewidth]{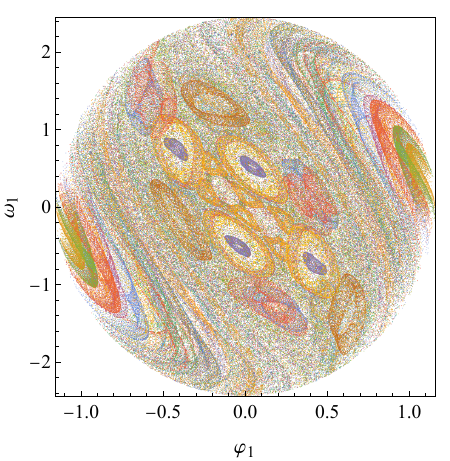}}
			\caption{The Poincar\'e sections of the system~\eqref{eq:rhs}, constructed for $\mu_1=3$ with varying $\mu_2$, at the constant energy level $E=E_\text{min}+0.12$. Here $E_\text{min}$ is the energy minimum at $\ell_0=0.2$ with zero values of the remaining  state variables. The cross-section was specified as $\ell=0.2$, with the direction $v>0$.  The results were projected onto  
			$(\varphi_1,\omega_1,\varphi_2)$-space, and to
			the plane $(\varphi_1,\omega_1)$. 
		\label{fig:ciecie_variable_0}}
	\end{figure*}
				\begin{figure*}[t]
		\centering
		\subfigure[The mass ratios $\mu_1=3,\, \mu_2=2\cdot 10^{-4}$. Regular, non-chaotic domain with two distinct  periodic solutions.]{
	\includegraphics[width=0.8\linewidth]{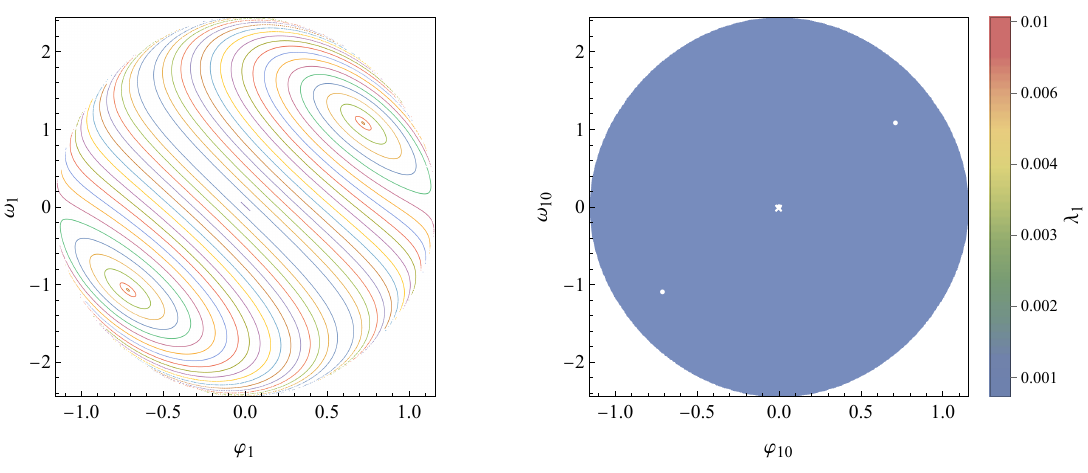}}
	
	\subfigure[The mass ratios $\mu_1=3,\, \mu_2=0.2$. Coexistence of  regular and chaotic areas with depicted periodic solutions.]{	\includegraphics[width=0.8\linewidth]{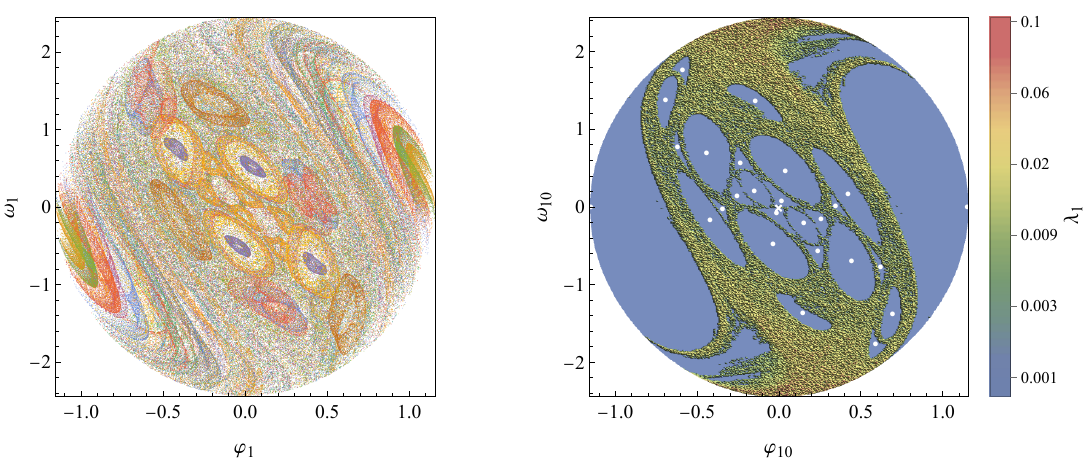}}
		\caption{(Color online) The Poincar\'e section of the system~\eqref{eq:rhs} and their corresponding  Lyapunov--Poincar\'e refined maps, constructed for constant values of the parameter $\mu_1=3$ with varying $\mu_2$, at the constant energy level $E=E_\text{min}+0.12$.  Here, $E_\text{min}$ is the energy minimum at $\ell_0=0.2$ with zero values of the remaining  state variables. The cross-section was specified as $\ell=0.2$, with the direction $v>0$. In the Lyapunov--Poincar\'e refined map, the color scale is logarithmic, corresponding to the magnitude of $\lambda_1$. The plots reveal several zones, which are regular and chaotic. Blue regions indicate regular dynamics, while the rest of the domain is responsible for the system’s chaotic behavior. White dots represent
		periodic solutions of the variable-length pendulum with certain ratios of frequencies.}
		\label{fig:ciecie_variable}
	\end{figure*}
	\begin{figure}[t]
	\centering
	\includegraphics[width=0.99\linewidth]{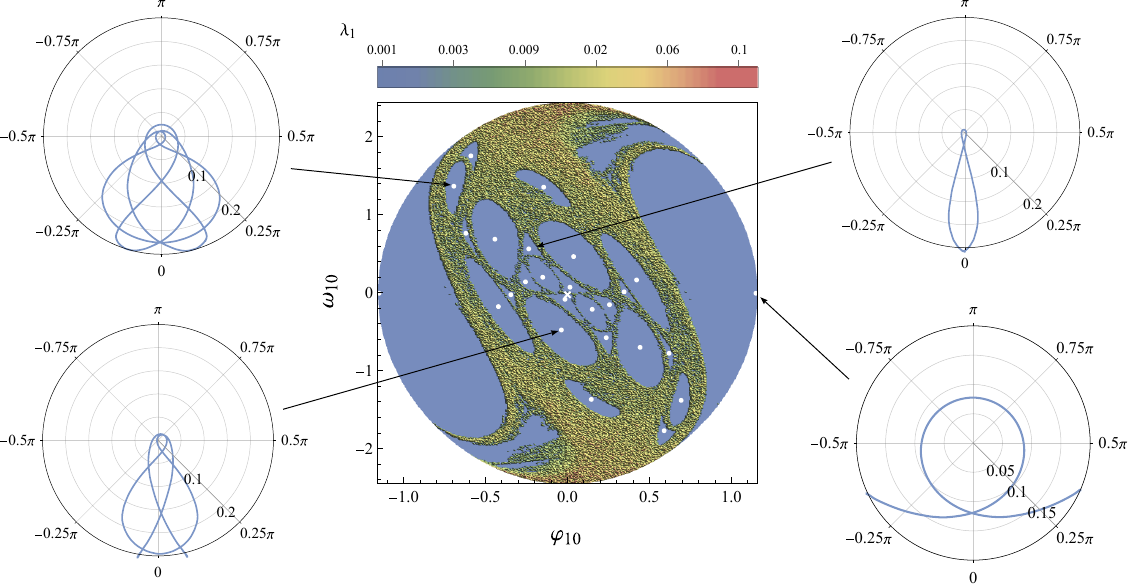}
	\caption{(Color online)  The  Lyapunov--Poincar\'e refined map together with polar plots computed for distinct values of the initial conditions marked at the maps showing different structures of the trajectories and their periodicity. In the radial direction of the polar plots, we measure $\ell(t)$; in the angular direction, we quantify $\varphi_1(t)$.\label{fig:lyap_pik}}
\end{figure}
\subsection{The Lyapunov--Poincar\'e refined map}

\begin{figure}[t]
		\centering
		\subfigure[Energy $E=0.5$, the coexistence of regular and chaotic domains with depicted periodic solutions]{
		\includegraphics[width=0.8\linewidth]{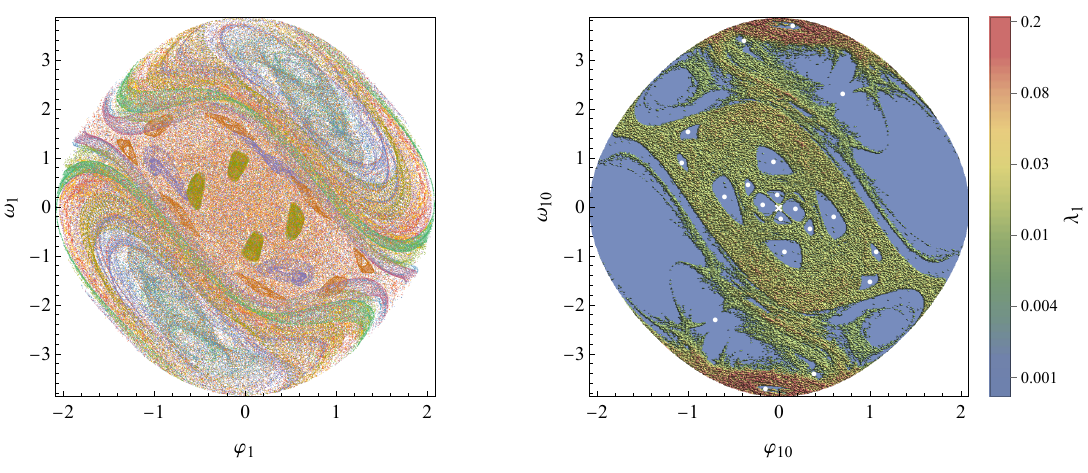}}\\
		\subfigure[Energy $E=0.65$, chaos occupies larger area of the Poincar\'e plane]{\includegraphics[width=0.8\linewidth]{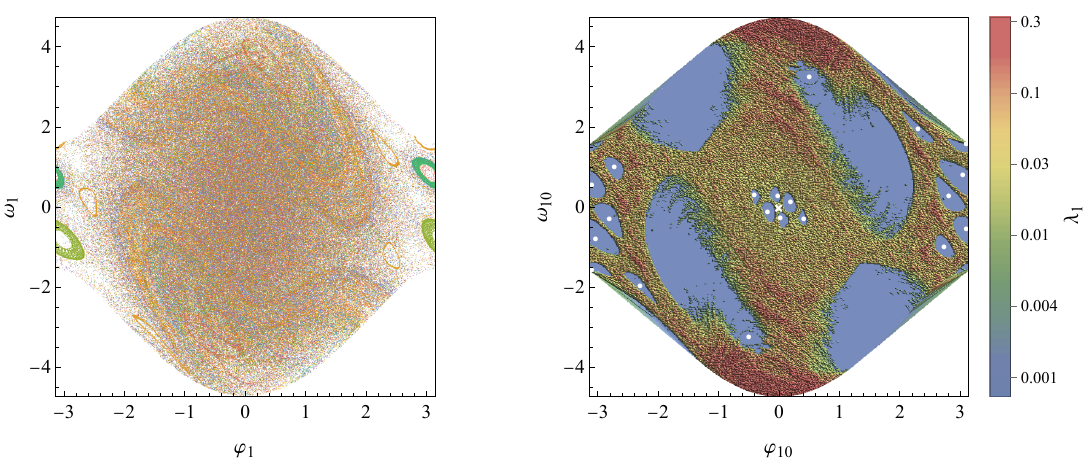}}
			\caption{(Color online) The Poincar\'e section of the system~\eqref{eq:rhs} and their corresponding  Lyapunov--Poincar\'e refined maps, constructed for constant values of the parameters $\mu_1=3$ and $\mu_2=0.2$, with varying energy levels. The cross-section was specified as $\ell=0.2$, with the direction $v>0$. In the Lyapunov--Poincar\'e refined map, the color scale is logarithmic, corresponding to the magnitude of $\lambda_1$. The plots reveal several zones, which are regular and chaotic. Blue regions indicate regular dynamics, while the rest of the domain is responsible for the system’s chaotic behavior. White dots represent
		periodic solutions of the variable-length pendulum with certain ratios of frequencies. }
		\label{fig:ciecie_variable2}
	\end{figure}
The computations of the Poincar\'e section of the classical double pendulum have been well established over the years. Indeed, in the case of Hamiltonian systems of two degrees of freedom, this method gives a quick insight into the dynamics of a considered model. Especially in the pendulum-like systems show the beautiful coexistence of periodic, quasi-periodic, and chaotic solutions within the Poincar\'e cross-section plane. In the context of the integrability of Hamiltonian systems, the Poincar\'e sections are very often used to confirm the chaotic nature of the studied system as well as an indicator of the possible presence of the first integrals (conservation laws) and integrable dynamics~\cite{Szuminski:23::,Szuminski:24::}. For a Hamiltonian system of two degrees of freedom system, the method of the Poincar\'e section typically results in a two-dimensional plot. If the system is integrable, the trajectories in phase space are confined to a surface of invariant tori, and the Poincar\'e section will show regular patterns with periodic solutions bounded by quasi-periodic loops. If the system exhibits chaotic behavior, the Poincar\'e section plane is covered by scattered, random-looking points corresponding to the fact that the trajectories can freely wander over large regions of the phase space. However, in the context of Hamiltonian systems of more than two degrees of freedom, this method is less practical for computational and technical reasons. 
The phase space of Hamiltonian systems with three degrees of freedom is richer and more complicated since trajectories lay in 3-dimensional tori, and their intersections with the Poincar\'e section will be two-dimensional surfaces rather than simple curves.
 Visualizing these sections is more complex than in the 2DoF case, and advanced techniques are needed to project or represent the dynamics effectively. For this purpose, we combine the Lyapunov exponents with the phase-parametric diagrams, which provide an excellent global view of the system's dynamics.

In Fig.~\ref{fig:ciecie_variable_0}(a), we can see the first Poincar\'e cross-section of the system created for $\mu_2=2\cdot 10^{-4}$ at the energy level $E=E_\text{min}+0.12$, where $E_\text{min}$ is the energy minimum at $\ell_0=0.2$ with zero values of the remaining  state variables.  The patterns shown in Fig.~\ref{fig:ciecie_variable_0}(a) are formed by tracing the intersections of numerically calculated orbits with a properly generic surface of section $\ell=0.2$, where $v > 0$, in a five-dimensional hypersurface defined by a constant energy level $E$, as defined in~\eqref{eq:energy}. These cross-section points were then projected onto $(\varphi_1,\omega_1,\varphi_2)$-space, and to the plane $(\varphi_1,\omega_1)$. From Fig.~\ref{fig:ciecie_variable_0}(a), it is evident that the system appears to be integrable. The Poincar\'e section displays regular patterns with two distinct particular periodic solutions enclosed by quas-periodic loops. Each loop represents a distinct initial condition. This depiction closely resembles the Poincar\'e section for the standard Hamiltonian system of two degrees of freedom, as for $\mu_2=2\cdot 10^{-4}$, our model can be treated as the integrable swinging Atwood machine perturbed by the second pendulum of negligible mass.  However, the situation could be more complex when we increase the value of $\mu_2$, which provides a greater impact on the system's dynamics coming from the second pendulum. Our suspicion is confirmed. The next Poincar\'e section, as visible in Fig.~\ref{fig:ciecie_variable_0}(b), shows complex, non-distinguishable system dynamics with multiple overlapping trajectories. 
It is unclear whether we have periodic solutions in the Poincar\'e  plane. Even more, regular, quasi-periodic trajectories are not distinguishable from chaotic orbits. Therefore, as mentioned previously,   the Poincar\'e section for Hamiltonian systems of multiple degrees of freedom is almost unpractical. To overcome the above difficulties, we combine the method of Lyapunov exponents with the phase-parametric diagrams.

We process slightly differently as in constructing the Lyapunov refined map depicted in Fig.~\ref{fig:bif_pik}. Namely, we fix values of the parameters $\mu_1$ and $\mu_2$. Then, we set a family of the initial conditions
\begin{equation}
\label{eq:ini_point}
\ell_0=0.2,\quad v_0=v\left(E,\varphi_{10},\omega_{10}\right),\quad \varphi_{10}=\varphi_{10},\quad \omega_{10}=\omega_{10},\qquad \varphi_{20}= \omega_{20}=0, 
\end{equation}
where $v(E,\varphi_{10},\omega_{10})$ is the positive square root of the quadratic equation for $v$ imposed by the energy conservation $E=\const$, defined in~\eqref{eq:energy}. Thus, for the fixed value of the energy, we have restrictions for $(\varphi_{10},\omega_{10})$ since the square root for $v$ must be real. Thus, we build a grid $B$ of a large number of initial conditions uniformly distributed in the available area of the Poincar\'e plane $(\varphi_{10},\omega_{10})$, and we compute the Lyapunov exponents for each initial condition. As a result, we obtain the Lyapunov exponent diagram showcasing the areas of regular and chaotic behavior of the system. Moreover, to get periodic solutions and their numbers, we use the algorithm discussed in Sec.~\ref{sec:refined} while constructing the Lyapunov refined map depicted in Fig.~\ref{fig:bif_pik}. Namely, from the obtained Lyapunov exponent diagram, we exclude these pairs $(\varphi_{10},\omega_{10})$ from the grid $B$, for which $\lambda>0$. Then, from the reduced grid of $n$ initial conditions, we compute the Poincar\'e sections, and we collect the points $(\varphi_1,\omega_1)$ when the trajectory passes through
 the cross-section plane $\ell_0=0.2$ with positive direction $v>0$,  for a sufficiently large number of sections. As a result we obtain $n$-lists containing sections points $(\varphi_1 ,\omega_1)$. Then, we perform the similarity check. 
In each list, we identify the pattern of repeating values of $\varphi_1$ and $\omega_1$ in a specific order. A finite set of repeating points is observed if the motion is periodic. We select the points closest to the periodic solutions to avoid redundant repetitions of the same families of solutions. In this way, we obtain a grid of the initial values $(\varphi_{10},\omega_{10})$, for which motions are periodic or very close to periodic ones. Combining this grid with the Lyapunov exponents diagram, we get a clear map showcasing the coexistence of regular and chaotic dynamics with depicted periodic solutions. We called such a map the \textit{Lyapunov--Poincar\'e refined map}.

The first two results of the above computations are depicted in Fig.~\ref{fig:ciecie_variable}, which show the classical Poincar\'e sections and their corresponding Lyapunov--Poincar\'e refined map.   From these maps, one can easily distinguish between the areas on the plane where the motion is regular or chaotic, which was almost impossible in the standard noised Poincar\'e section. 
Moreover, by the white dots, we identify periodic solutions with certain ratios of frequencies. The white cross located at the center represents a particular aperiodic solution lying on an invariant manifold, where the equations of motion~\eqref{eq:rhs}, when restricted, reduce to a system with one degree of freedom corresponding to the ordinary Atwood machine. Along this particular solution, we will perform an effective integrability analysis, which will be detailed in the next section.
As we expected for the infinitesimal value of the parameter $\mu_2=2\cdot 10^{-4}$, the Lyapunov--Poincar\'e refined map confirms the regular dynamics of the system as evidenced in Fig.~\ref{fig:ciecie_variable}(a). 
The image shows a regular pattern with $\lambda_1\approx 0$, for each pair $(\varphi_{10},\omega_{10})$ in the available area of the plane,  precluding the system's chaotic dynamics in this scenario. We have only two distinct particular periodic solutions marked by white dots.  Overall, we may conclude that for such a small value of $\mu_2$, the perturbation coming from the second pendulum is negligible, and the system's motion is almost integrable.  
 However, as $\mu_2$ increases, for instance, up to value $\mu_2=0.2$, the chaos appears, and we observe the classical coexistence of the regular domain bounded by necklace formations
responsible for the chaotic motion. Moreover, at the 
Lyapunov--Poincar\'e refined map one can also detect several families of periodic or very close to periodic one solutions of the variable-length pendulum. The polar plots of the mentioned periodic trajectories with the corresponding initial conditions depicted at the Lyapunov--Poincar\'e refined map are visible in Fig.~\ref{fig:lyap_pik}. As we can notice, we have certain families of periodic solutions with different shapes and periodicity. Overall, the Poincaré–Lyapunov refined map, within the framework of higher-dimensional Hamiltonian systems, can be regarded as a powerful tool, effectively replacing or complementing existing methods.

Usually, the Poincar\'e cross sections are constructed for fixed values of the parameters describing the system with varying energy. In this way, the evolution of the system's dynamics and the route to the chaos is given. Let us now consider  $\mu_1=3$ and $\mu_2=0.2$, as in Fig.\ref{fig:ciecie_variable}(b), with further 
increasing energy values. Fig.~\ref{fig:ciecie_variable2}  show the Poincar\'e cross sections and their corresponding Lyapunov--Poincar\'e refined maps for larger values of  $E$. We observe the classical path to global chaos. 
As the energy value increases, the area occupied by the chaotic motion of the system also expands, as expected. In fact, the refined Lyapunov–Poincar\'e maps, as shown in Fig.~\ref{fig:ciecie_variable}, exhibit structures very similar to the classical Poincar\'e cross-sections of the 2D double pendulum or the swinging Atwood machine model.   
However, it is worth mentioning that despite the large energy values, where the chaos dominates a significant portion of the admissible area, several families of periodic solutions with certain frequency rations can still be found.

All in all, we may conclude that the variable-length double pendulum, governed by the equations of motion~\eqref{eq:rhs} and the energy first integral~\eqref{eq:energy}, is a highly not integrable Hamiltonian system. We prove this fact in the following section.

%\begin{figure*}[htp]
%%
%		\centering
%		\includegraphics[width=0.45\linewidth]{mpoinlau_mu1_mu2_shortL1}\hspace{0.2cm}
%		\includegraphics[width=0.45\linewidth]{mpoinlau_mu1_mu2_shortL2}	\caption{$\mu_1=3,\mu_2=0.2, \ell_0=0.2, v_\ell_0=0, \omega_{10}=0, \omega_{20}=0$}
%		\label{fig:parki_min_max}
%	\end{figure*}

	\section{Integrability \label{sec:integrability}}
	As we have seen, the numerical analysis shows the system's complicated and hyperchaotic dynamics. Nevertheless, the possible existence of additional first integrals for specific values of the parameters is still unknown. Therefore, a comprehensive integrability analysis is needed. 
	We state the following theorem.
	\begin{theorem}
	\label{th:int_{2}}
	Let $\mu_{1}$ and $\mu_{2}$ are positive, non-zero parameters. If the variable-length double pendulum governed by equations of the motion~\eqref{eq:rhs}  is integrable in the sense of Liouville, then 
	\begin{equation}
	\label{eq:waunrek}
	\mu_{1}=\frac{M}{m_{1}+m_{2}}= 1+\frac{4}{n(n+3)-2},\quad n\in \N_{>0}.
	\end{equation}
	\end{theorem}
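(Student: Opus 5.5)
We apply the Morales--Ramis theorem along the particular solution mentioned in the discussion of the Lyapunov--Poincar\'e refined maps: the invariant manifold $\mathscr N=\{\varphi_1=\varphi_2=0,\ \omega_1=\omega_2=0\}$. On $\mathscr N$ the third component of $\bF_c$ and $\bF_q$ vanishes, the second as well, and the mass matrix is block-diagonal in the relevant rows. The equations~\eqref{eq:rhs} therefore restrict to the one-degree-of-freedom Atwood motion
\[
(1+\mu_1)\ddot\ell=1-\mu_1,\qquad \ell(t)=\tfrac{1-\mu_1}{2(1+\mu_1)}t^2+v_0t+\ell_0 ,
\]
with energy $e=\tfrac{1+\mu_1}{2}v^2+\ell(\mu_1-1)-\mu_2$. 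For $\mu_1\neq1$ this is a non-equilibrium solution. The case $\mu_1=1$, which is allowed by~\eqref{eq:waunrek} only in the limit, should be excluded or treated separately.

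Next we linearize~\eqref{eq:rhs} along this solution. The tangential part, in $(\delta\ell,\delta v)$, decouples and is solvable. The normal variational equations form a four-dimensional system for $(\Phi_1,\Phi_2)=(\delta\varphi_1,\delta\varphi_2)$:
\[
\begin{bmatrix}\ell^2&\mu_2\ell\\ \mu_2\ell&\mu_2\end{bmatrix}\begin{bmatrix}\ddot\Phi_1\\ \ddot\Phi_2\end{bmatrix}+\begin{bmatrix}2\ell\dot\ell\,\dot\Phi_1\\ 2\mu_2\dot\ell\,\dot\Phi_2\end{bmatrix}+\begin{bmatrix}\ell\,\Phi_1\\ \mu_2\Phi_2\end{bmatrix}=0 .
\]
Here the coupling terms from $\bF_c$ must be linearized carefully, keeping only those linear in the normal variations. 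We then change the independent variable to $z=\ell(t)$, or better to $z$ with $\dot\ell^2$ expressed as a polynomial in $\ell$ via the energy, which becomes linear in $z$. Choosing the energy level $e$ so that $\dot\ell^2\propto z$ turns the NVE into a system with rational coefficients in $z$. It has a regular singularity at $z=0$ and an irregular one at infinity. The differential Galois group is preserved under this finite covering, up to the identity component.

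The main obstacle is analysing the Galois group of this fourth-order symplectic system. First I would try to decouple it. Mass-weighted combinations such as $\Phi_1$ and $\ell\Phi_1-\Phi_2$, or eigenvectors of the constant matrix obtained after rescaling $\Phi_1$ by $\ell$, may split the system into two second-order equations, or put it in triangular form with one second-order block. For a block that decouples, I would reduce it to the normal form $w''=r(z)w$ and run the (two-dimensional) Kovacic algorithm. With a swinging-Atwood-type block, one expects a Whittaker or confluent-hypergeometric equation. The Martinet--Ramis/Kimura conditions for its solvability then yield exactly the discrete family $\mu_1=1+4/(n(n+3)-2)$, through a condition like $\kappa\pm\mu$ being half-integers with exponent differences depending on $\mu_1$.

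If the system does not decouple, I would use the four-dimensional Kovacic-type algorithm recalled in the Appendix. Concretely, the first step is to compute the local exponents at $z=0$ and the formal exponents at infinity. These yield the necessary conditions for a virtually Abelian identity component, namely the existence of hyperexponential or Liouvillian solutions of appropriate degree with exponents summing to nonnegative integers. The Diophantine condition on the exponents is then solved to get~\eqref{eq:waunrek}. Finally, for parameters outside this family, the identity component is non-Abelian, and Morales--Ramis gives non-integrability.
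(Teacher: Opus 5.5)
You start exactly as the paper does (invariant manifold $\scN$, the Atwood solution, splitting off the tangential block). However, your normal variational equations are wrong in the $\varphi_2$-row, and everything downstream depends on them. Along $\scN$ the mass-matrix entry $\mu_2\sin\Delta_\varphi$ multiplies $\ddot\ell=-M_1\neq0$, so it contributes the linear term $\mu_2\ddot\ell\,(\Phi_1-\Phi_2)$, which you dropped. Also, the term $2\mu_2\dot\ell\,\omega_1\cos\Delta_\varphi$ of $\bF_c$ linearizes to $2\mu_2\dot\ell\,\dot\Phi_1$, not $2\mu_2\dot\ell\,\dot\Phi_2$. The correct rows are
\[
\ell\ddot\Phi_1+\mu_2\ddot\Phi_2+2\dot\ell\,\dot\Phi_1+\Phi_1=0,\qquad \ell\ddot\Phi_1+\ddot\Phi_2+2\dot\ell\,\dot\Phi_1-M_1(\Phi_1-\Phi_2)+\Phi_2=0.
\]
Their difference, divided by $1-\mu_2$, is $\ddot\Phi_2+M_2(\Phi_2-\Phi_1)=0$. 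This is what permits the elimination $\Phi_1=\Phi_2+\ddot\Phi_2/M_2$ leading to the scalar equation~\eqref{eq:fourth-order2}. No change of independent variable or special energy level is needed either. With $v_0=0$ and $\ell_0=M_1/2$ one has $\ell=\tfrac{M_1}{2}(1-t^2)$, so the coefficients are already polynomial in $t$, with regular singularities at $t=\pm1$ and an irregular one at $t=\infty$.

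The more serious gap is that the step which actually proves the theorem is only gestured at: showing that the identity component is non-Abelian for $\mu_1$ outside the family. The route you favour cannot succeed. Write $M_1=2/(n(n+3))$. For $n\notin\Z$ the Galois group contains $\mathrm{Sp}(4,\C)$, so the system neither decouples nor triangularizes into second-order blocks, and no Whittaker/Kimura condition lies behind~\eqref{eq:waunrek}. The family comes instead from the exponents $\{-n,n+3\}$ at $\tau=1/t=0$. These are the roots of $e^2-3e-2/M_1=0$, read off from the dominant part $M_2\bigl(t^2\ddot y+4t\dot y-\tfrac{2}{M_1}y\bigr)$ of~\eqref{eq:fourth-order2}. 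Together with the exponents $\{0,1,2\}$ at $t=\pm1$ and the exponential parts $\exp(\pm\rmi\sqrt{M_2}\,t)$, they decide the question: for $n\notin\Z$ no admissible combination yields a hyperexponential solution, while for $n\in\N$ there is a polynomial solution of degree $n$.

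Your fallback criterion is also insufficient. The absence of hyperexponential solutions of $L$ only excludes an invariant line (and, by symplectic duality, an invariant hyperplane). It does not exclude an invariant $2$-plane or the imprimitive groups of Lemma~\ref{lem:T}. For instance, the normalizer of a maximal torus of $\mathrm{Sp}(4,\C)$ acts irreducibly yet has an Abelian identity component. The missing ingredient is the criterion of Lemma~\ref{lem:5}. You must also show that the second exterior power (system~\eqref{eq:7b}) has exactly one hyperexponential solution, the one given by the invariant symplectic form; this is Lemma~\ref{eq:lemma2ex}. Only then do you get $\mathrm{Sp}(4,\C)\subset\scG$ and non-integrability via Morales--Ramis.
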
 
\begin{proof}
	To simplify the further analysis of variational equations and their differential Galois groups, we introduce the following positive  parameters
	\begin{equation}
		M_1\equiv\frac{\mu_1-1 }{\mu_1+1},\quad M_2\equiv\frac{2\mu_1}{(\mu_1+1)(1-\mu_2)}.
	\end{equation}
	From~\eqref{eq:parki1}, the parameters $M_1, M_2$ have the following restrictions
	\begin{equation}
		\label{eq:restric}
		0<M_1<1,\quad\text{and}\quad M_2\geq 1+M_1.
	\end{equation}
	System~\eqref{eq:rhs} has the following invariant manifold
	\begin{equation*}
		\label{eq:scn}
		\scN=\left\{(\ell,v,\varphi_1,\omega_1,\varphi_2,\omega_2)\in \field{C}^6\, \big{|}\, \varphi_1=\varphi_2=0=\omega_1=\omega_2\right\}.
	\end{equation*}
	Restricting the right-hand sides of~\eqref{eq:rhs} to $\scN$, we obtain one degree of freedom
	system  
	\begin{equation}
		\dot \ell=v,\quad \dot v=-M_1.
	\end{equation}
	%  with the energy first integral
	%  \begin{equation}
		% \mathcal{E}=\left(\frac{\mu_1+\mu_2}{2}\right)v^2+(\mu_1-\mu_2)\ell.
		%  \end{equation}
	The above equations describe the motion of the ordinary Atwood machine with the following solutions
	\begin{equation}
		\label{eq:ell-vl}
		\ell(t)=-\frac{1}{2}M_1^2t^2+v_0t+\ell_0,\quad v(t)=-M_1t+v_0.
	\end{equation}
	Here $\ell_0$ and  $v_0$ are respective constants of the integrations related to an initial length  $\ell_0=\ell_0$ and initial velocity $v_0=v_0$. 
    For further considerations, we fix $v_0=0$, and $\ell_0=M_1/2$. 
 Thus, a particular solution of system~\eqref{eq:rhs} is $\vvarphi(t)=(\ell(t),v(t),0,0,0,0)$, along which we compute the variational equations.

	Let $\vX=[L,V,\Theta_1,\Omega_1,\Theta_2,\Omega_2]^T$ denote the variations of $\vx=[\ell,v,\varphi_1,\omega_1,\varphi_2,\omega_2]^T$. Then, the variational equations of system~\eqref{eq:rhs} along  $\vvarphi(t)$, are as follows
	\begin{equation}
		\dot \vX=\vA(t)\cdot \vX,\quad \text{where}\quad \vA=\frac{\partial \vv}{\partial\vx}(\vvarphi(t)).
	\end{equation}
	Here $\vv$ denotes the right-hand side of system~\eqref{eq:rhs}. The explicit form of the non-constant  matrix $\vA$ is given by
	%  \begin{equation*}\vA=
		% \begin{pmatrix}
			% 0&0&0&1&0&0\\[.15cm]
			% 0&0&0&0&1&0\\[.15cm]
			% 0&0&0&0&0&1\\[.15cm]
			% 0&0&0&0&0&0\\[.15cm]
			% 0& \frac{\mu_1-\mu_2+\mu_1\mu_2+\mu_2^2}{(1-\mu_2)(\mu_1+\mu_2)\ell}& \frac{2\mu_1}{(\mu_2-1)(\mu_1+\mu_2)\ell}&0&-\frac{2v_\ell  }{\ell}&0\\[.15cm]
			% 0&\frac{2\mu_1\mu_2}{(\mu_2-1)(\mu_1+\mu_2)}&\frac{2\mu_1\mu_2}{(1-\mu_2)(\mu_1+\mu_2)}&0&0&0\\
			% \end{pmatrix}
		%  \end{equation*}
	\begin{equation*}\vA=
		\begin{pmatrix}
			0&1&0&0&0&0\\
			0&0&0&0&0&0\\
			0&0&0&1&0&0\\
			0&0&\frac{M_1-M_2}{\ell}&-\frac{2	v  }{\ell}&\frac{M_2-M_1-1}{\ell}&0\\[.15cm]
			0&0&0&0&0&1\\[.15cm]
			0&0&M_2&0&-M_2&0\\
		\end{pmatrix}.
	\end{equation*}
	We notice that variational equations separate into
blocks: normal variational equations for variables $[\Theta_1,\Omega_1,\Theta_2,\Omega_2]^T$ and tangential equations for variables $[L,V]^T$. Because the tangential system is trivially solvable, for further integrability analysis, we take the normal part. It has the form
\begin{equation}
	\label{eq:normal}
	\begin{split}
   \ddot \Theta_1+\left(\frac{2v}{\ell}\right)\dot\Theta_1+\left(\frac{M_2-M_1}{\ell}\right)\Theta_1+\left(\frac{M_1-M_2+1}{\ell}\right)\Theta_2=0,\qquad \ddot\Theta_2+M_2(\Theta_2-\Theta_1)=0.
    \end{split}
\end{equation}
We introduce a new parameter $n$ defined by relation 
\begin{equation}
\label{eq:M1}
    M_1=\frac{2}{(3 + n) n},  \quad n>0, 
\end{equation}
 and we rewrite system~\eqref{eq:normal} as  one  fourth-order differential equation for the variable $y\equiv \Theta_2$:  	% \begin{equation}
	% 	\label{eq:fourth-order}
	% 	\begin{split}
	% 		\ddddot \Theta_2+2\left(\frac{v}{\ell}\right)\dddot \Theta_2+\left(M_2+\frac{M_2-M_1}{\ell}\right)\ddot\Theta_2+2M_2 \left(\frac{v}{\ell}\right)\dot\Theta_2+\left(\frac{M_2}{\ell}\right)\Theta_2=0.
	% 	\end{split}
	% \end{equation}
\begin{equation}
		\label{eq:fourth-order2}
		\begin{split}
			  (t^2-1)\ddddot y +4t\dddot y +[2-M_2(1+3n+n^2-t^2)]\ddot y  + 4M_2t\dot y-n(3+n) M_2y=0.
		\end{split}
	\end{equation}
	The non-integrability proof of the system relies on the necessary conditions outlined in Lemma~\ref{lem:5}, which is presented in the Appendix. To begin, we must verify whether equation~\eqref{eq:fourth-order2} has a hyperexponential solution. A function  $f(z)$  is defined as hyperexponential if its logarithmic derivative,  $f'(z)/f(z)$, is a rational function. The following lemma serves as the basis for determining the existence of a hyperexponential solution and, consequently, the potential integrability of the system.
 \begin{lemma}\label{eq:lemma2}
     If $n>0$ is not an integer, then Eq.~\eqref{eq:fourth-order2} does not admit a hyperexponential solution. 
 \end{lemma}
 \begin{proof}  Eq.~\eqref{eq:fourth-order2} has two regular singular points $t_{1}=1$ and $t_2=-1$, and one irregular at infinity $t_\infty=\infty$. 
 Exponents at these points belong to the following sets
 \begin{equation}
 \label{eq:exp}
 E_1=\{ 0, 1, 2\},\quad E_2=\{ 0, 1, 2\},\quad E_\infty=\{-n,n+3,\pm \mathrm{i}\sqrt{M_2}t\}.
 \end{equation}
First, we examine whether the equation allows for rational solutions, which are a special type of hyperexponential solution. If such a solution exists, it will take the form
\begin{equation}
\label{eq:sol_rat}
y(t)=P(t)(t-t_1)^{e_1}(t-t_2)^{e_2}, \quad \text{where}\quad e_1,e_2\in \{ 0, 1, 2\}.
\end{equation}
Here $P(t)$ is a polynomial of degree $d= -e_1-e_2 - e_\infty$, where $ e_\infty \in \{-n,n+3\}$.
Hence, if a rational solution exists, then it is polynomial.  As $e_1\geq
0$ , and $e_2\geq 0$, there is only one choice $e_\infty=-n$. Then by a direct
check one can verify that for a given integer $n>0$,
equation~\eqref{eq:fourth-order2} has a polynomial solution of degree $n$.

To check whether the equation admits a hyperexponential solution, it is
convenient to move the irregular point to the origin. So, we change the
independent variable $t=1/\tau$. Now, the transformed equation has a regular
singular points at $\tau_{1}=1$ and $\tau_2=-1$, with the same exponents as
before the transformation. For the irregular singularity at $\tau_3=0$, the
generalised exponents are $E_\infty=\{-n,n+3,\pm \mathrm{i}\sqrt{M_2}/\tau\}$.
We know that if a hyperexponential solution exists, then it is of the form 
\begin{equation}
	\label{eq:hyexp2}
	y(\tau) = \scE(\tau) R(\tau)(\tau-\tau_1)^{e_1}(\tau-\tau_2)^{e_2},
\end{equation}
where $R(\tau)$ is a rational function and $\scE(\tau)$ is called the exponential part of formal solution. 
Using Maple\footnote{To be precise, we use Maple package ''DEtols'' with ''formal\textunderscore sol'' procedure. }, we found that the exponential parts of formal solutions of Eq.~\eqref{eq:fourth-order2} are the following
\begin{equation}
     	\scE(\tau)	\in \{\tau^{-n},\tau^{n+3},\exp[\pm\mathrm{i}\sqrt{M_2}/\tau]\}.
 \end{equation}
Let us assume that $n$ is not a positive integer. Then,  the first two possible choices for $ \scE(\tau)$ are
excluded. Therefore, if there exists a hyperexponential solution, it is of the
form $y(\tau)= \exp[\pm\mathrm{i}\sqrt{M_2}/\tau]R(\tau)$. With this
substitution, we obtain the equation for $R(\tau)$. The necessary condition for the existence of a rational solution is that there are exponents $e_1,e_2$, and $e_\infty$, such that their sum is a non-positive integer. But by assumption, $n$ is not an integer, so it is impossible. 
 This ends the proof.
 \end{proof}
 By the main Lemma~\ref{lem:5} given in the Appendix, we have shown that for $M_{1}$ given in~\eqref{eq:M1} with a noninteger $n$ the normal variational equations~\eqref{eq:normal} do not possess any hyperexponential solution and to complete the non-integrability proof, we have to show that its corresponding external second power has exactly one solution. 
 \begin{lemma}\label{eq:lemma2ex}
     If $n>0$ is not an integer, then the second exterior power of equation~\eqref{eq:fourth-order2} has exactly one hyperexponential solution. 
 \end{lemma}
 \begin{proof}
 To verify the second assumption of Lemma~\ref{lem:5}, we need to examine the second exterior power of equation~\eqref{eq:fourth-order2}, which corresponds to the system in~\eqref{eq:7b}. When expressed as a scalar equation, this system becomes a sixth-order equation, given by
     \begin{equation}
	\label{eq:ex22}
	\sum_{i=0}^6 b_i(\tau) y^{(i)} = 0, \qquad b_i(\tau)\in \Q(\tau),
\end{equation}
 where we introduced a new independent variable $\tau=1/z$. The rational coefficients $b_i(\tau)$ are rather long, so we do not write them here explicitly.
 Eq.~\eqref{eq:ex22} has three regular singularities $\tau_1=-1$, $\tau_2=1$, and $\tau_3=\infty$ with the respective sets of exponents: 
\begin{equation}
    E_1 = \{0,1,2,3\},\quad E_2 = \{0,1,2,3\}, \quad E_3=\{ 2,3,4,5,6,8\}
\end{equation}
At $\tau = 0$ there is an irregular singular point. Using Maple, we found that the exponential parts  of a formal solution $y(\tau)$ 
of~\eqref{eq:ex22} are  
\begin{equation}
    \scE(\tau)\in\left\{ \tau^{\pm 2}, \tau^{n+1} \rme^{\pm u/\tau}, \tau^{-(n+2)} \rme^{\pm u/\tau} \right\}.
\end{equation}
One can check that there are only two exponents $e_0=\{\pm2\}$. Now, it is easy to check that equation~\eqref{eq:fourth-order2} has only one rational solution $y(\tau) = \tau^{-2}$. Indeed, if a rational solution exists, then there exist integer exponents $e_i\in E_i$ such that $e_0+e_1+e_2+e_3$ is a nonpositive integer. There is only one such choice, namely $ e_0=-2$, $e_1=e_2=0$ and $e_3=2$.
A possible hyperexponential solution of~\eqref{eq:ex22} has the form 
\begin{equation}
	\label{eq:hyexp22}
	y(\tau) = \scE(\tau) R(\tau)(\tau-\tau_1)^{e_1}(\tau-\tau_2)^{e_2},
\end{equation}
where $R(\tau)$ is a rational function and $e_i\in E_i$. Taking into account the possible values of $e_1$ and $e_2$, we see that the possible hyperexponential solution
is of the form $ y(\tau) = \scE(\tau) R(\tau) $.  If $\scE(\tau)=\tau^{\pm2}$, then $y(\tau)$ is rational, but we know that there is only one rational solution. In effect, we have to investigate four cases $\scE(\tau)\in \left\{  \tau^{n+1} \rme^{\pm u/\tau}, \tau^{-(n+2)} \rme^{\pm u/\tau} \right\}$.

Let us consider the first possibility. Inserting $y=R(\tau)\tau^{n+1} \rme^{\pm u/\tau} $
into~\eqref{eq:ex22}, we obtain the equation of sixth order for the rational function $R(\tau)$. It has the same sigularities as equation~\eqref{eq:ex22}. The sets of exponents $E_1$ and $E_2$ do not change, while   
\begin{equation*}
  E_0 =\{0,-3-2n
\}\cap\Z=\{0\},\qquad   E_3 = \{ 3+n,n+4,n+5,n+6,n+7,n+9\}\cap\Z = \emptyset,
\end{equation*}
because by assumption $n\not\in\Z$. Hence, there is no rational solution for $R(\tau)$. 
In a similar way we exclude possiblity that $y=R(\tau)\tau^{-n-2} \rme^{\pm u/\tau} $. 
We conclude that if $n$ is not a positive integer, then the second exterior power of equation~\eqref{eq:fourth-order2} has only one hyperexponential solution, which ends the proof.
 \end{proof}
 		\begin{figure*}[t]
		\centering
			\includegraphics[width=0.33\linewidth]{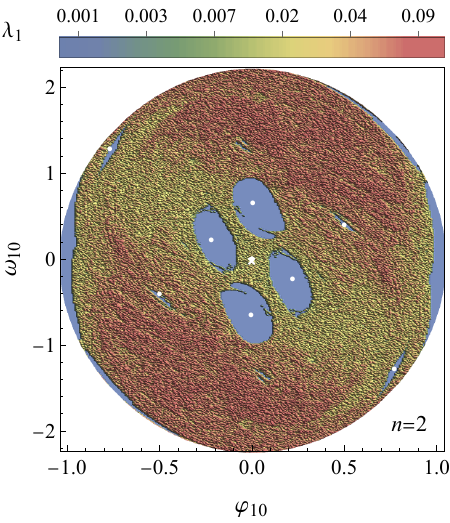}		\includegraphics[width=0.33\linewidth]{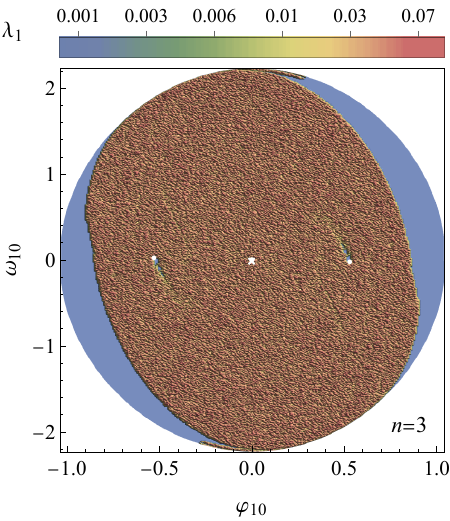}	\includegraphics[width=0.33\linewidth]{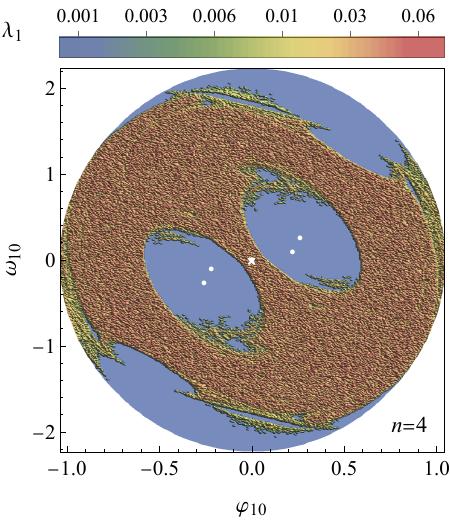}\\
						\caption{(Color online) The Lyapunov--Poincar\'e refined maps of the system~\eqref{eq:rhs}, constructed for constant values of the parameters $\mu_1=1+4/[n(n+3)-2]$ and $\mu_2=0.2$ with varying $n$,  at the constant energy level $E=E_\text{min}+0.1$.  Here, $E_\text{min}$ is the energy minimum at $\ell_0=0.2$ with zero values of the remaining state variables.   The color scale is logarithmic, corresponding to the magnitude of $\lambda_1$. The plots reveal several zones, which are regular and chaotic. Blue regions indicate regular dynamics, while the rest of the domain is responsible for the system’s chaotic behavior. White dots represent
		periodic solutions of the variable-length pendulum with certain ratios of frequencies.}
		\label{fig:ciecie_variable_rest}
	\end{figure*}
We have proved that if $n$ is not a positive
integer, the fourth-order variational equation~\eqref{eq:fourth-order2} does not
possess any hyperexponential solution. Thus, according to Lemma~\ref{lem:5}, the system
is not integrable, in the sense of Liouville.
\end{proof}
The  question about possible integrable cases of the system for $\mu_1$ satysfaing~\eqref{eq:waunrek} remains open. We already know due to the detailed numerical analysis, that for $n=1$, which implies $\mu_1=3$, the dynamics of the system is highly chaotic, precluding its integrability. However, the integrability for larger $n\in \N_{>1}$, needs the additional view. Fig.~\ref{fig:ciecie_variable_rest} illustrated three Poincar\'e--Lyapunov refined maps, constructed for $n=2,3,4$ under the initial conditions~\eqref{eq:ini_point}, at the constant energy levels   $E=E_\text{min}+0.1$. As we can notice, the chaos occupies most available areas on the maps. One can find only a few periodic solutions bounded by quasi-periodic or chaotic trajectories. 

With the help of Lyapunov exponents, we can easily estimate the percentage of chaos,
 as well as the maximum value of the largest Lyapunov exponent, equipped in the available area of the  Lyapunov–Poincar\'e refined maps  as
 a function of $n$. 
 Results for $n=5,\ldots, 50$ are shown in Fig.~\ref{fig:percentege}. As we can see,  the percentage of chaos increases rapidly as $n$ is increasing as well.  Indeed, over the range $n\in \{14\ldots,20\}$, the system goes to an almost fully ergodic one with the percentage of chaos $~ 99\%$. Then, for $n>20$, the situation changes.  The system's dynamics start to be more regular, and the percentage of chaos is monotonically decreasing and for  $n\geq 46$, the chaos is negligible. 
This decline in chaos for larger  $n$  values occurs because the mass ratio  $\mu_1$  approaches the critical threshold  $\mu_1 = 1$, where the pendulums undergo a transition to a state where they effectively slide down. Additionally, the maximum values of  $\lambda_1$, the largest Lyapunov exponent, generally decrease as a function of  $n$, as expected. 
From the above, we formulate the following conjecture.

\begin{conjecture}
Let  $\mu_{2}$ be a non-zero, positive parameter and $\mu_{1}$ provided by~\eqref{eq:waunrek}. Then, the variable-length double pendulum   governed by equations of the motion~\eqref{eq:rhs},  is not integrable in the sense of Liouville.
\end{conjecture}
%We do not yet know how to prove this conjecture for arbitrary  $n$. However, in the following subsection, we demonstrate that the conjecture holds for the case  $n = 1$, which implies  $\mu_1 = 3$.
\subsection{The reduction of the variational equations}

				\begin{figure*}[t]
		\centering 
\includegraphics[width=0.58\linewidth]{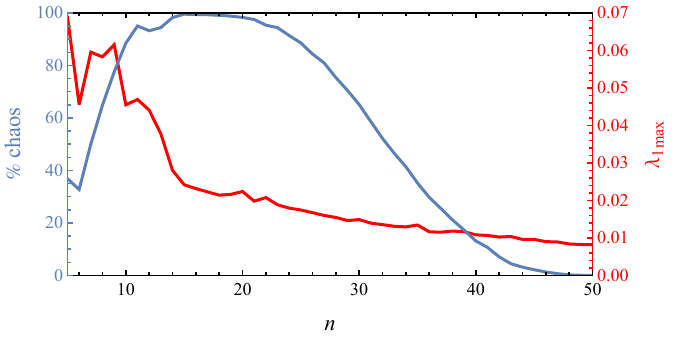}
\caption{(Color online) The percentage of chaos against the maximal values of the largest Lyapunov exponent $\lambda_1$ in phase space as a function of the parameter $n\in \{5,\ldots, 50\}$. }
\label{fig:percentege}
	\end{figure*}
For fixed $n\in \N$ the fourth-order variational
equation~\eqref{eq:fourth-order2} factors in as $L_1[L_2[L_3y(t)]]=0$, where
$L_1,L_3$ are differential operators of order one, while $L_2$ has order
two. The coefficients of the operators $L_i$ are unknown rational functions of $t$
whose degrees depend on $n$. Unfortunately, we are not able to write general
formulae for operators $L_i$ keeping $n$-arbitrary. However, the right
operator $L_3$ has order one. Thus, the variational
equation~\eqref{eq:fourth-order2} has a hyperexponential
solution~\eqref{eq:sol}. Because $e_{1},e_2\in \{0,1,2\}$, the
solution~\eqref{eq:sol} translates into the following form.
\begin{equation}
    y(t)=P(t),\quad \quad P(t)\in C[t],\quad 
		\quad \operatorname{deg}P(t)=n.
\end{equation}
Then, substituting $y(t)=P\int x(t)$ we reduce the order of variational
equation~\eqref{eq:fourth-order2},  where $x(t)$ is a new dependent variable.
Next, thanks to the presence of the first integral of the reduced third-order
variational equation, further reduction is possible. As a result, we obtain a
second-order differential equation for which the exhaustive integrability
analysis is possible. Let us show the above procedure in an example of $n=1$.  Hence, let us prove the following
theorem.
\begin{theorem}
For $\mu_1=3$ and $0<\mu_2<	1$, the variable-length double pendulum defined by equations of the motion~\eqref{eq:rhs}  is not integrable in the class of functions meromorphic in coordinates and velocities.
\end{theorem}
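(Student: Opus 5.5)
For $\mu_1=3$ we have $M_1=\tfrac12$, which by \eqref{eq:M1} corresponds to $n=1$. This is precisely the value excluded by Lemma~\ref{eq:lemma2}, so the argument via Lemma~\ref{lem:5} does not apply, because the fourth-order equation now has a hyperexponential solution. The plan is instead to exploit that solution to reduce the order and then apply the classical Kovacic algorithm to a second-order factor. The particular solution and the normal variational equations stay as in the proof of Theorem~\ref{th:int_{2}}. With $n=1$, equation~\eqref{eq:fourth-order2} reads
\[
(t^2-1)\ddddot y+4t\dddot y+\bigl[2-M_2(5-t^2)\bigr]\ddot y+4M_2t\dot y-4M_2y=0,\qquad M_2=\frac{3}{2(1-\mu_2)},
\]
and one checks directly that $y=P(t)=t$ is a solution. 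This is the degree-$n$ polynomial predicted in the proof of Lemma~\ref{eq:lemma2}.

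\textbf{Step 1 (first reduction).} Substitute $y=t\int x\,\rmd t$. The resulting equation is a third-order linear equation $L[x]=0$ with rational coefficients.

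\textbf{Step 2 (second reduction).} The normal variational system is Hamiltonian (symplectic). Consequently, the invariant line spanned by $y=t$ is paired with an invariant hyperplane, so the adjoint of $L$ has a rational solution, i.e. an integrating factor. I would find it by searching for rational solutions of the adjoint equation (exponents at $\pm1$ and at infinity bound the degrees). This gives a first integral $\Phi(t,x,\dot x,\ddot x)=c$, and setting $c=0$ yields a second-order equation $\ddot w+p(t)\dot w+q(t)w=0$. The Galois group of \eqref{eq:fourth-order2} is then block upper-triangular, with diagonal blocks of sizes $1,2,1$ (the $1\times1$ blocks coming from the two first-order factors, the $2\times2$ block from this equation). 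The middle block is the Galois group of the second-order equation. Hence, if that group is $\mathrm{SL}(2,\C)$, the identity component of the whole group is not Abelian, and the Morales--Ramis theorem gives non-integrability in the meromorphic class.

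\textbf{Step 3 (Kovacic analysis, the main obstacle).} Put the second-order equation into reduced form $w''=r(t)w$ via $w=\exp(-\tfrac12\int p)\,\tilde w$. Then compute:
\begin{itemize}
\item the poles of $r$: $t=\pm1$, together with possibly apparent singularities introduced by the reductions;
\item their local exponents, where I expect differences in $\tfrac12\Z$ or $\Z$;
\item the behaviour at infinity: irregular of Poincaré rank one, with exponential parts $\exp(\pm\rmi\sqrt{M_2}\,t)$ inherited from $E_\infty$ in \eqref{eq:exp}.
\end{itemize}
Since the pole order at infinity is odd-negative (i.e. $r$ has positive degree behaviour $\sim -M_2$), Kovacic's cases 3 (finite primitive groups) and typically case 2 are excluded immediately by the necessary conditions. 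Case 2 should fail because $\infty$ has order $0$ in $r$ with nonzero leading term, which forces an exponential factor incompatible with a quadratic-extension solution; I would verify this with the $E_c$ sets. For case 1, I would form $[\sqrt r]_\infty=\pm\rmi\sqrt{M_2}$ together with the $\alpha^\pm_c$ data at $\pm1$ and the apparent points. I would then show that every candidate degree $d=\alpha_\infty-\sum\alpha_c$ is either not a non-negative integer, or leads to a polynomial equation for the candidate polynomial factor of the solution whose leading coefficients force $M_2$ to take values outside $M_2>3/2$ (equivalently $0<\mu_2<1$). This is where I expect the real work, especially keeping track of apparent singularities, which may force case 1 to produce a finite set of exceptional $M_2$. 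If such values occur, I would check them individually, or note that they violate \eqref{eq:restric}.

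Concluding that the Galois group of the second-order factor is $\mathrm{SL}(2,\C)$ then finishes the proof through the Morales--Ramis theorem.
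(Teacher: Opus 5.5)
\paragraph{Verdict: genuine gap in Step~3.} Your route is the paper's. For $n=1$ you use the polynomial solution $y=t$ and reduce by $y=t\int x$. You then reduce again through a first integral of the third-order equation; your symplectic argument for its existence is fine, and the integrating factor turns out to be $t\,(6+M_2(t^2-5))$, which gives the paper's $F$. Finally you run Kovacic on $F=0$.

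The gap is Step~3, which carries the whole proof. You only outline it, and partly with invalid reasoning.

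\paragraph{Cases 2 and 3.} The order of $r$ at infinity is $0$ ($r\to -M_2$), not odd or negative. That does exclude Kovacic's case~3, which needs order at least $2$ at $\infty$. It does \emph{not} exclude case~2. The finite poles are $t=0,\pm1,\pm\sqrt{5-6/M_2}$, all of order two, with exponent differences $3,0,0,2,2$, so the necessary condition for case~2 holds. The case-2 sets are $E_0=\{2,8,-4\}$, $E_{\pm1}=\{2\}$, $E_{\pm\sqrt{5-6/M_2}}=\{2,6,-2\}$ and $E_\infty=\{0\}$, and they even admit the degrees $d=0$ and $d=2$. Your claim that an unramified exponential factor $\rme^{\pm\rmi\sqrt{M_2}\,t}$ at infinity is incompatible with an imprimitive group is not an argument.

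The missing idea is the one the paper uses. $\Delta=0$ at $t=\pm1$ means the indicial equation has a double root there, so a local solution contains $\ln(t\mp1)$ and the local monodromy is a non-diagonalisable Jordan block. Finite groups and subgroups of the infinite dihedral group consist of semisimple elements. Hence only the reducible case and $\mathrm{SL}(2,\C)$ remain, and only case~1 has to be checked.

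\paragraph{Case 1.} Here your caution about exceptional values of $M_2$ is exactly right, and the check must actually be done. The local data are:
\begin{itemize}
\item $\alpha_0^\pm\in\{2,-1\}$;
\item $\alpha_c=\tfrac12$ at $c=\pm1$;
\item $\alpha_c^\pm\in\{\tfrac32,-\tfrac12\}$ at $c=\pm\sqrt{5-6/M_2}$;
\item $\alpha_\infty^\pm=0$, because $r$ is even and so has no $1/t$ term at infinity.
\end{itemize}
The choice $\alpha_0=-1$ and $\alpha_c=-\tfrac12$ at both $c=\pm\sqrt{5-6/M_2}$ gives $d=0-(-1+1-1)=1$. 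So, contrary to the paper's one-line claim that no $d(e)$ lies in $\N_0$, there is one candidate.

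Transformed back, it reads $x=(\alpha t+\beta)\,t^{-2}\rme^{\pm\rmi\sqrt{M_2}\,t}$. Then $\rme^{\mp\rmi\sqrt{M_2}\,t}F(x)$ is a Laurent polynomial in $t$:
\begin{itemize}
\item its $t^0$ coefficient is $10M_2(2M_2-3)\beta$;
\item once $\beta=0$, its $t^3$ coefficient is $4M_2(1-M_2)\alpha$.
\end{itemize}
Hence for $M_2>3/2$, i.e.\ $0<\mu_2<1$, there is no hyperexponential solution. At $M_2=3/2$ ($\mu_2=0$), by contrast, $y=\rme^{\pm\rmi\sqrt{M_2}\,t}$ really does solve the fourth-order equation.

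With the logarithm argument and this computation added, your plan becomes a complete proof.
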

\begin{proof}
For $n=1$,  the variational  equation~\eqref{eq:fourth-order2} takes the form
\begin{equation}
\label{eq:fourth_m1}
(t^2-1)\ddddot y+4t \dddot y+\left[2+M_2(t^2-5)\right]\ddot y+4M_2  t\dot y-4M_2y=0.
\end{equation}
It is easy to verify that  
$y(t)=t$ is the solution of~\eqref{eq:fourth_m1}. Thus, making the change of the dependent variable $y(t)=t\int x(t)$, we reduce equation~\eqref{eq:fourth_m1} to a third-order differential equation. It has the form
\begin{equation}
\label{eq:third_order}
\begin{split}
     t(t^2-1)\dddot x+(8t^2-4)\ddot x+\left[14+M_2(t^2-5)]t\, \dot x +[4+2M_2(3t^2-5)\right]x=0.
    \end{split}
\end{equation}
Moreover,  reduced equation~\eqref{eq:third_order} can be further factorized thanks to the existence of  the first integral 
\begin{equation}
\begin{split}
F&=\left[(6+M_2(t^2-5))t^2(t^2-1)\right]\ddot x+2\left[12t^2-6+M_2(t^4-10t^2+5)\right]t\,\dot x\\ & +\left[M^2t^2(t^2-5)^2+12(t^2+1)+10M_2(t^4-4t^2-1)\right]x.
\end{split}
\end{equation}
Indeed, $\dot  F$ generates~\eqref{eq:third_order}.  
Therefore, it is sufficient to study the differential Galois group of the second-order differential equation $F=0$, which can be rewritten as
\begin{equation}
\label{eq:second_son}
    \ddot x+p(t)\dot x+q(t)x=0,
\end{equation}
where
\begin{equation}
\label{eq:ppqq}
\begin{split}
 &p(t)=\frac{2}{t}+\frac{1}{t-1}+\frac{1}{t+1}-\frac{2M_2t}{6+M_2(t^2-5)},\\ & q(t)=M_2-\frac{2}{t^2}+\frac{2(1-M_2)}{t-1}-\frac{2(1-M_2)}{t+1}+\frac{2M_2}{6+M_2(t^2-5)}.
 \end{split}
\end{equation}
To apply the Kovacic algorithm, we rewrite equation~\eqref{eq:second_son} in its reduced form
\begin{equation}
\label{eq:second_red}
    \ddot u=r(t) u,\qquad \text{where}\quad u=\left(\frac{t\sqrt{t^2-1}}{6+M_2(t^2-5)}\right)x,
\end{equation}
and the coefficient
\begin{equation}
    \label{eq:RR}
    \begin{split}
    r(t)=\frac{2}{t}-\frac{1}{4(t-t_1)^2}-\frac{1}{4(t-t_2)^2}+\frac{3(5M_2-6)}{M_2(t-t_3)^2(t-t_4)^2} +\frac{-18+M_2(79-23t^2-2M_2(t^2-5)^2}{2M_2(t-t_1)(t-t_2)(t-t_3)(t-t_4)}.
    \end{split}
\end{equation}
Equation~\eqref{eq:second_red} has six singularities
\begin{equation}
    t_0=0,\quad t_1=1,\quad t_2=-1,\quad t_{3,4}=\pm\sqrt{5-\frac{6}{M_2}},\quad t_\infty=\infty
\end{equation}
Points $t_i=\{t_0,t_1,t_2,t_3,t_4\}$ are regular singularities of order two, while $t_\infty$ is irregular with zero degree of infinity.
The respective differences of exponents at $t_i$ are given by
\begin{equation}
\Delta_0=3,\qquad\Delta_1=\Delta_2=0,\qquad \Delta_3=\Delta_4=2,
\end{equation}
where $\Delta_i=\sqrt{1+4c_i}$ and $c_i$ are coefficients of the dominant terms of the Laurent series expansions of $r(t)$  around $t_i$.
To avoid the confluences of the poles, we assume $M_2\notin\{3/2,6/5\}$. Nevertheless, these values of $M_2$
where previously excluded due to our physical assumption~\eqref{eq:restric}.

Now, we can prove the following.
\begin{lemma}
    The differential Galois group of reduced equation~\eqref{eq:second_red} is $\operatorname{SL}(2,\C)$.
\end{lemma}
\begin{proof}
    As $\Delta_{1,2}=0$ local solutions in a neighborhood of points $t_\star=t_1$ and $t_\star=t_2$, contain a logarithmic term. Then, according to the well-known theory, two linearly independent solutions $u_1$ and $u_2$ of equation~\eqref{eq:second_red} in a vicinity of $t_\star\in\{t_1,t_2\}$, have the following forms
    \begin{equation}
       u_1(t)= (t-t_\star)^\rho f(t)\quad u_2(t)=z_1(t)\ln(t-t_\star)+(t-t_\star)^\rho h(t),
    \end{equation}
    where $f(t)$ and $h(t)$ are analytic at $t_\star$ and $f(t_\star)\neq 0$. In the paper~\cite{Maciejewski:02::}, authors proved that when at least one of the differences of the exponents is zero, then a continuation of the matrix of the fundamental solutions along a small lop encircling $t_\star$ counterclockwise gives rise to a triangular non-diagonal monodromy matrix. Thus, the differential Galois group of~\eqref{eq:second_son} cannot be finite nor dihedral. Therefore, $\mathcal{G}$ must be either triangularizable or the full group $\operatorname{SL}(2,\mathbb{C}) $. 
If the differential Galois group of the reduced second-order differential equation~\eqref{eq:second_son} with $r(t)\in \Q(t)$ is triangulisable, then it possesses a non-zero solution of the form
\begin{equation}
\label{eq:sol}
u(t)=P\exp\left[\int \omega\right],\quad \text{where}\quad P\in \R[t],\quad \omega\in \Q(t).
\end{equation}
If such a solution exists, it can be identified using the first case of the Kovacic algorithm. However, if the algorithm fails at any step, this indicates that no such solution exists, and consequently, $\mathcal{G}$ is not triangularizable.

We proceed with the first case of the Kovacic algorithm. For each singular point  $t_i$, we identify the corresponding sets of exponents.
\begin{equation}
    E_i:=\frac{1}{2}\left\{1\pm \Delta_i\right\},\quad \text{for}\quad i=0,\ldots, 4.
\end{equation}
Because the degree of infinity is $v=0$, an auxiliary set $E_\infty$ has a form $E_\infty=\{\pm b/2a\}$, where $a$ is a free term in the expansion $[\sqrt{r}]_\infty$, while $b$ is the coefficient in $t^{-1}$ of $r$ minus the coefficient of $t^{-1}$ in $([\sqrt{r}]_\infty)^2$. In our case $a=\rmi M_2$ and $b=0$. Therefore, the explicit forms of the auxiliary sets $E_i$ and $E_\infty$ are given by
\begin{equation}
\begin{split}
 E_1=\{2,-1\},\quad E_2=E_3=\{1/2,1/2\},\quad E_4=E_5=\{3/2,-1/2\},\quad E_\infty=\{0,0\}.
    \end{split}
\end{equation}
Next, from the Cartesian product $E=\prod_{i=0}^4E_i\times E_\infty$,  we select those elements $e=(e_0,e_1,e_2,e_3,e_4,e_\infty)$ for which
\begin{equation}
    d(e)=e_\infty-\sum_{i=0}^4e_i\in \N_0.
\end{equation}
In our case, however, no $d(e)$ satisfies this condition. Consequently, its differential Galois group cannot be conjugated to any triangular subgroup of $ \operatorname{SL}(2, \mathbb{C}) $. Since all cases of the Kovacic algorithm fail, we conclude that  $\scG = \operatorname{SL}(2, \mathbb{C}) $, completing the proof.\end{proof}

Our analysis leads us to conclude that the differential Galois group of equation~\eqref{eq:second_red} is $ \operatorname{SL}(2, \mathbb{C})$ with a non-Abelian identity component. This result implies that the variable-length double pendulum, described by system~\eqref{eq:rhs}, is not integrable in the sense of Liouville for the mass ratio  $\mu_1 = 3$  and any value of  $\mu_2$  within the range  $0 < \mu_2 < 1$.
\end{proof}
\section{Conclusions \label{sec:conclusions}}
The complicated and mostly chaotic dynamics of multiple pendulums is well known but still in great scientific activity. This is because such systems served as a fundamental framework within classical mechanics
and dynamical systems theory, providing insights into the coexistence of regular and chaotic motion. The systems of multiple pendulums
explain many fundamental phenomena and have found applications in physics, astronomy, engineering, and synchronization theory. Currently, we can observe increased work on the study of dynamics and chaos in the variable-length double pendulums. Compared to classic pendulums, variable-length double pendulums present a more complex dynamic behavior due to the time-dependent nature of their arm lengths. This variability introduces additional degrees of freedom to the system, making the traditional numerical/analytical techniques significantly more challenging. Specifically, our goal was to address the limitations of classical methods, such as Poincar\'e sections, when applied to systems with higher-dimensional phase spaces. These traditional techniques, while effective in low-dimensional systems, become less informative and harder to interpret in higher-dimensional contexts due to the complexity of the phase space. To overcome these difficulties, we propose a novel approach -- the Lyapunov refined maps. This new method combines the strengths of Poincar\'e sections, phase-parametric diagrams, and Lyapunov exponents into one unified framework. By doing so, it provided both qualitative and quantitative insights into the system’s dynamics. The Lyapunov refined maps allowed us to measure the strength of chaos while also capturing periodic, quasi-periodic, and chaotic behaviors in a single comprehensive view. This approach provides a stronger tool for studying complex systems, especially in situations where traditional methods have difficulty offering clear insights.

  The detailed analysis suggested the non-integrability of the proposed model.  We gave the analytical proof of this fact. Thanks to the presence of particular solutions, we were able to apply the Morales-Ramis theory.  The novelty of our work concerns the fact that we performed the integrability analysis of the Hamiltonian systems of three degrees of freedom for which the variational equations transform into fourth-order differential equations.  For the effective analysis of the differential Galois group of variational equations, we apply a recently formulated extension of the Kovacic algorithm to dimension four.   For some values of the parameters satisfying the necessary integrability conditions, we used the Lyapunov exponents diagram as an indicator of the presence of the first integrals and integrable dynamics. We have shown that for the mass ratio $\mu_1$, for which the necessary conditions of integrability are satisfied, the system exhibits strong chaotic behavior visible in the Lyapunov diagrams, which unequivocally excludes its integrability. 

 In summary, our article provides an in-depth analysis of the dynamics and integrability of the variable-length double pendulum, offering novel insights and methodologies for future research in this area. We utilized advanced tools to achieve results that are highly relevant and valuable for the study of differential equations, vibrations, and synchronization theory. Our results not only contribute to the fundamental understanding of variable-length pendulum dynamics but also highlight their potential applications in adaptive robotics, crane models, energy harvesting, and biomechanics.

\section*{Acknowledgements}
 For Open Access, the authors have applied a CC-BY public copyright license to any Author Accepted Manuscript (AAM) version arising from this submission. We would like to give special thanks to
A.~J. Maciejewski for the discussion and his insightful comments on this work. The present work was carried out during a research internship of W. Szumiński, conducted under the guidance of T. Kapitaniak.
		\section*{Funding} This research has been  founded by The
	National Science Center of Poland under Grant No.
	2020/39/D/ST1/01632 and  
by the Polish Ministry of Science program Regional Excellence Initiative,
RID/SP/0050/2024/1.
	\section*{Data availability }
	The data that support the findings of this study are available from the corresponding author,  upon reasonable request.
	\section*{Declarations}
	\textbf{Compliance with ethical standards}\\
	\textbf{Conflicts of interest} The authors declare no conflict of interest.\\

\appendix
\section{Integrability: useful facts and statements}
During the applications of the Morales-Ramis theory, it is crucial to determine whether the differential Galois group of the variational equations is virtually Abelian—that is, whether the group’s identity component is Abelian. In most cases where Morales-Ramis theory has been successfully applied, the variational equations decompose into a system of second-order equations or contain at least one second-order equation as a subset. Additionally, it is often feasible to transform these second-order equations into ones with rational coefficients.
This transformation makes it possible to apply the Kovacic algorithm, as introduced in \cite{Kovacic:86::}. The algorithm tells whether the given second-order differential equation has closed-form solutions, specifically by identifying if solutions exist within the field of Liouvillian functions. The Kovacic algorithm operates by classifying all possible algebraic subgroups of $\mathrm{SL}(2,\mathbb{C})$, which enables it to characterize the differential Galois group of the equation as a by-product of this classification.

Let us consider a Hamiltonian system  with $n$ degrees of freedom:
\begin{equation}
\label{eq:rs0}
  \dot \vz=\mathbb{J} H'(\vz),\quad \mathbb{J}=\begin{bmatrix} 0&\mathbb{I}_n\\
    -\mathbb{I}_n&0\end{bmatrix},\quad \vz=[\vq,\vp]^T,
\end{equation}
where $H: \R^{2n}\to \R$ is a scalar function called Hamiltonian and $\mathbb{I}_n$ is unit symplectic matrix. 
 Suppose that we know a particular solution $t\to \vvarphi(t)\in \R^{2n}$ of the system~\eqref{eq:rs0}.  The variational equations obtained by means of linearisation of~\eqref{eq:rs0} around the solution $\vvarphi(t)$ take the form
	\begin{equation}
		\label{eq:wariacyjne}
\dot \vy= \mathbb{J}H''(\varphi(t)) \vy.
	\end{equation}
It is easy to show that the differential Galois group of this system is a subgroup of the symplectic group,
$ \mathrm{Sp}(2n,\C)$. For $n=1$ the group $\mathrm{Sp}(2,\C)$ is isomorphic to
$\mathrm{SL}(2,\C)$. However, for larger dimensions
$\mathrm{Sp}(2m,\C)\subset\mathrm{SL}(2m,\C)$, it is much smaller than $\mathrm{SL}(2m,\C)$. 
 Fortunately, the classification of subgroups of $\mathrm{Sp}(4,\C)$ has been established. 
 This classification was used in \cite{Combot:18::} where an extension of the Kovacic algorithm was developed for symplectic differential operators of degree four.

For a brief description of this algorithm, we introduce the appropriate terminology. Let $L$
be a differential operator with coefficients in $\C(z)$
\begin{equation}
  L(y)=y^{(n)}+a_{n-1}y^{(n-1)}+\cdots+a_1y'+a_0y=0,\quad a_i\in\C(z),
\end{equation}
and $A$ is its corresponding companion matrix, that is
\begin{equation}
  A=\begin{bmatrix}
    0&1&0&\cdots&0\\
    0&0&1&\cdots&0\\
    \vdots&\vdots&\vdots&\vdots&\vdots\\
    0&0&0&\cdots&1\\
    -a_0&-a_1&-a_2&\cdots&-a_{n-1}
  \end{bmatrix}.
  \label{eq:compa}
\end{equation}
The $2n$-th order operator is
\begin{itemize}
\item symplectic if there exists an invertible skew-symmetric matrix
  $W$ with elements in $\C(z)$, which satisfies
  \begin{equation}
    \label{eq:7b}
    A^TW+WA+W'=0,
  \end{equation}

\item projectively symplectic if there exists an invertible skew-symmetric matrix $W$ with  coefficients in $\C(z)$  and  $\lambda\in\C(z)$, such that
  \begin{equation}
    \label{eq:8b}
    A^TW+WA+W'+\lambda W=0.
  \end{equation}
\end{itemize}
An operator $L$ of order $n=2m$ is symplectic (respectively projectively
symplectic) when its Galois group is isomorphic to a subgroup of
symplectic matrices $\mathrm{Sp}(2m,\C)$ (respectively projectively
symplectic matrices $\mathrm{PSp}(2m,\C)$)
\[
  \begin{split}
    \mathrm{Sp}(2m,\C)=\{M\in\mathbb{M}_{2m}(\C)\ |\ M^T \mathbb{J} M=\mathbb{J}\},\quad
    \mathrm{PSp}(2m,\C)=\{M\in\mathbb{M}_{2m}(\C)\ |\ M^T \mathbb{J}
    M=\lambda \mathbb{J},\ \lambda\in\C^{\ast}\}.
  \end{split}
\]
If  $L$  is projectively symplectic, then, by multiplying it by a suitable hyperexponential function, the operator can be adjusted to become symplectic. A function  $f(z)$  is classified as hyperexponential if its logarithmic derivative,  $f'(z)/f(z)$ , is a rational function.

\begin{lemma}
  \label{lem:1}
  Assume that the system $\dot x=Ax $ is symplectic, that is, there exists an invertible skew-symmetric matrix $W$ with coefficients in $\C(z)$ which is a solution of equation~\eqref{eq:7b}. Then $x(t)$
  is a solution of $\dot x(t)=Ax(t)$ if and only if $ x_{\ast}(t)=Wx(t)$ is a solution of the adjoining equation $\dot{x}_{\ast}(t)=-A^Tx_{\ast}(t)$.
\end{lemma}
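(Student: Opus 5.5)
The statement to prove is Lemma~\ref{lem:1}. The plan is a direct verification, and it needs no integrability theory. Write the adjoint system as $\dot x_\ast=-A^Tx_\ast$, and take a solution $x(t)$ of $\dot x=Ax$. Set $x_\ast=Wx$, where $W$ is the invertible skew-symmetric solution of~\eqref{eq:7b}. Differentiating with the product rule gives
\begin{equation*}
\dot x_\ast=W'x+W\dot x=(W'+WA)x .
\end{equation*}
Identity~\eqref{eq:7b}, $A^TW+WA+W'=0$, rearranges to $W'+WA=-A^TW$. Substituting this,
\begin{equation*}
\dot x_\ast=-A^TWx=-A^Tx_\ast .
\end{equation*}
So $x_\ast$ solves the adjoint equation, which proves the forward implication.

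For the converse, let $x_\ast$ be any solution of the adjoint equation and define $x=W^{-1}x_\ast$. This makes sense because $W$ is invertible over $\C(z)$, so $W^{-1}$ has rational entries and is defined away from finitely many points. Differentiate using $(W^{-1})'=-W^{-1}W'W^{-1}$:
\begin{equation*}
\dot x=-W^{-1}W'W^{-1}x_\ast-W^{-1}A^Tx_\ast .
\end{equation*}
Now write $W'=-A^TW-WA$, again from~\eqref{eq:7b}, and substitute. The first term becomes
\begin{equation*}
W^{-1}A^Tx_\ast+AW^{-1}x_\ast .
\end{equation*}
Its first part cancels the second term of $\dot x$, leaving $\dot x=AW^{-1}x_\ast=Ax$.

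Combining the two directions: $x$ solves $\dot x=Ax$ if and only if $Wx$ solves the adjoint system. In other words, $W$ is a bijection between the two solution spaces.

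I do not expect a real obstacle. The one point needing care is that $W^{-1}$ exists as a rational matrix and that the identities hold on a common domain avoiding the poles of $A$, $W$ and $W^{-1}$. Skew-symmetry of $W$ is not actually used in the computation; it matters only for the symplectic interpretation of the result.
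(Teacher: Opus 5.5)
Your proposal is correct and is exactly the ``straightforward check'' the paper invokes without writing it out: both directions come from rewriting \eqref{eq:7b} as $W'+WA=-A^TW$, and invertibility of $W$ over $\C(z)$ is the only real point of care in the converse. The converse can also be shortened: if $Wx$ solves the adjoint system, then $W(\dot x-Ax)=0$, so $\dot x=Ax$.
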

This lemma can be verified by a straightforward check. Consequently, if the symplectic operator  $L$  possesses a right factor  $L_{1}$ , then its adjoint  $L^{\ast}$  will also have a right factor  $\widetilde{L}{1}$  of the same order. Therefore, if  $L$  has a right factor  $L_{1}$ , it also possesses a left factor of the same degree.

For a system $\dot{x} = A x$, we can define its external second power system. 
It provides the following differential equation
\begin{equation}
  \label{eq:21}
  \dot W = A W -W^{T}A^{T},
\end{equation}
where $W$ is an antisymmetric matrix. Thus, Eq.~\eqref{eq:7b} represents the exterior square of the dual system associated with the system  $x{\prime} = A x$.

The classification theorem formulated in \cite{Combot:18::} is the
following.
\begin{lemma}
  \label{lem:T}
  A Lie subgroup of $\mathrm{Sp}(4,\C)$ is up to conjugacy generated
  by elements of the form:
  \begin{enumerate}
  \item upper block triangular matrices with diagonal blocks of size
    at most $2\times 2$,
  \item $2\times 2$ diagonal matrices and anti-diagonal matrices
    \[
      \begin{bmatrix}
        \ast&\ast&0&0\\
        \ast&\ast&0&0\\
        0&0&\ast&\ast\\
        0&0&\ast&\ast
      \end{bmatrix},\quad
      \begin{bmatrix}
        0&0&\ast&\ast\\
        0&0&\ast&\ast\\
        \ast&\ast&0&0\\
        \ast&\ast&0&0
      \end{bmatrix},
    \]
  \item full group $\mathrm{Sp}_{4}(\C)$.
  \end{enumerate}
\end{lemma}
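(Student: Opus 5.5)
The plan is to classify an algebraic (Lie) subgroup $G\subset\mathrm{Sp}(4,\C)$ through its action on $\C^4$ with the symplectic form $\omega(x,y)=x^T\mathbb{J}y$. I will split into three cases: $G$ reducible, $G$ irreducible but imprimitive, and $G$ irreducible and primitive. The key tool throughout is that if $G$ preserves a subspace $V$, it also preserves $V^{\perp_\omega}$, and hence $V\cap V^{\perp_\omega}$ and $V+V^{\perp_\omega}$. Each invariant subspace can therefore be replaced by one that is either isotropic or symplectic.

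\textbf{Reducible case.} Let $V$ be a proper nonzero $G$-invariant subspace.
\begin{enumerate}
\item If $\dim V=3$, then $V^{\perp_\omega}$ is an invariant line, and every line is isotropic. So it suffices to treat $\dim V\in\{1,2\}$.
\item If $\dim V=1$, or $V$ is a Lagrangian plane, I extend a basis of $V$ to a symplectic basis adapted to the flag $V\subset V^{\perp_\omega}\subset\C^4$. In this basis every element of $G$ is upper block triangular with diagonal blocks of size at most $2\times2$, which is item 1.
\item If $\dim V=2$ and $\omega|_V$ is nondegenerate, then $\C^4=V\oplus V^{\perp_\omega}$ is a $G$-invariant decomposition into two symplectic planes. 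In a symplectic basis adapted to this splitting, $G$ is block diagonal, which is the first matrix in item 2.
\item The remaining possibility, $\dim V=2$ with $\omega|_V$ degenerate but nonzero, cannot occur, since $\omega$ is alternating and so has even rank on $V$.
\end{enumerate}

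\textbf{Imprimitive case.} Here $G$ is irreducible but permutes a nontrivial direct-sum decomposition of $\C^4$.
\begin{enumerate}
\item If the system consists of two planes $W_1\oplus W_2$ that are swapped, then $G^\circ$ preserves each $W_i$. Elements either preserve both planes or exchange them, giving the block-diagonal and block-anti-diagonal forms of item 2. Each plane is either symplectic, in which case the orthogonal splitting above applies, or Lagrangian. For a Lagrangian pair $W_1,W_2$ I choose a basis with $W_2$ dual to $W_1$, so stabilizing elements take the form $\operatorname{diag}(A,A^{-T})$, and the swapping elements are anti-diagonal in the same basis.
\item If the system consists of four lines, I will show that the permuted lines can be grouped into an invariant pair of planes. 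I would do this through the induced permutation action of $G/G^\circ$ on the lines together with the pairing induced by $\omega$. This reduces the four-line case to the two-plane case.
\end{enumerate}

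\textbf{Primitive case (main obstacle).} If $G$ is irreducible and primitive, then $G^\circ$ is reductive, and by Clifford theory $G^\circ$ acts either irreducibly or trivially. If $G^\circ$ is trivial, then $G$ is finite, and a finite primitive group carries no Lie-algebra information, so it is only relevant up to the identity component. If $G^\circ$ is irreducible, I would go through the connected reductive subgroups of $\mathrm{Sp}(4,\C)$ via root data: tori, $\mathrm{GL}(2)$, the various embeddings of $\mathrm{SL}(2)$, $\mathrm{SL}(2)\times\mathrm{SL}(2)$, and $\mathrm{Sp}(4)$. Each of these except $\mathrm{Sp}(4)$ and the principal $\mathrm{SL}(2)$ acting by $\mathrm{Sym}^3$ is reducible or imprimitive, and so falls under the earlier cases.

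The $\mathrm{Sym}^3$-embedded $\mathrm{SL}(2)$ and its normalizer are the delicate point: this group is irreducible and primitive but is not of the listed block forms. Following \cite{Combot:18::}, I would handle it by reading the lemma as a statement about the generating blocks relevant to the algorithm, treating this case separately as a group whose identity component is non-Abelian. This is the step where I expect the most care to be needed. Once it is dealt with, what remains is $G^\circ=\mathrm{Sp}(4,\C)$, which is item 3.
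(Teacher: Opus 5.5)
There is a genuine gap, and it is the step you flagged yourself. The primitive case cannot be closed, because the lemma as stated is false.

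The principal $\mathrm{SL}(2,\C)$, i.e.\ $\mathrm{SL}(2,\C)$ acting on $\mathrm{Sym}^3\C^2$ with its induced invariant alternating form, is a connected irreducible subgroup of $\mathrm{Sp}(4,\C)$. It escapes all three items:
\begin{itemize}
\item Irreducibility excludes every conjugate of item~1.
\item Connectedness forces any conjugate lying in the stabilizer of a pair of planes $\{W_1,W_2\}$ (the group generated by the matrices of item~2) into the block-diagonal identity component, so it would fix $W_1$.
\item Its dimension is $3\neq10$, so it is not item~3.
\end{itemize}
Saying you will ``read the lemma as a statement about the generating blocks relevant to the algorithm'' and treat this group ``separately'' is not a proof step. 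It replaces the lemma by a different one.

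The same applies to your dismissal of finite groups. $\mathrm{Sym}^3$ of the binary icosahedral group is a finite irreducible subgroup of $\mathrm{Sp}(4,\C)$. It is primitive, because that group has no subgroups of index $2$ or $4$, so it also lies outside items~1--3.

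What your reducible/imprimitive/primitive analysis actually proves, once completed, is an amended list: up to conjugacy $G$ is of type~1 or~2, or finite, or the principal $\mathrm{SL}(2,\C)$, or $\mathrm{Sp}(4,\C)$. The principal $\mathrm{SL}(2,\C)$ is its own normalizer in $\mathrm{Sp}(4,\C)$: its centralizer is $\{\pm I\}$, which it contains, and every algebraic automorphism of $\mathrm{SL}(2,\C)$ is inner. At the level of identity components, the principal $\mathrm{SL}(2,\C)$ must be added as a fourth item.

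The paper gives no argument here beyond attributing the lemma to \cite{Combot:18::} and to the classification of subgroups of $\mathrm{SL}(4,\C)$. Any correct version of that route also produces the principal $\mathrm{SL}(2,\C)$, so the defect lies in the statement, not in your method. Your remark that this group has non-Abelian identity component is the right reason nothing breaks in the Morales--Ramis application. Note, however, the knock-on effect: since $\Lambda^2\mathrm{Sym}^3\C^2\cong\mathrm{Sym}^4\C^2\oplus\C$, this group satisfies the hypotheses of Lemma~\ref{lem:5}. Its conclusion should therefore read ``$G^\circ$ is $\mathrm{Sp}(4,\C)$ or the principal $\mathrm{SL}(2,\C)$''.

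Two further steps need repair even for the amended statement.

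\emph{The four-line case.} The $\omega$-pairing cannot in general group the lines into two blocks. If the image of $G$ in $S_4$ is $A_4$ or $S_4$, it is $2$-transitive, so the graph with $L_i\sim L_j$ iff $\omega(L_i,L_j)\neq0$ is complete and no invariant pairing exists. What the pairing does give is that $G^\circ$ is trivial: it fixes every line, so it is diagonal with $d_id_j=1$ along each edge. Hence this case reduces to two planes only modulo finite groups. For the images $C_4$, $V_4$, $D_4$ a two-block system exists without using $\omega$.

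\emph{The Clifford step.} In the primitive case, Clifford theory only says that the restriction to $G^\circ$ is isotypic, $W^{\oplus k}$. It does not exclude $k=2$ with $G^\circ=\mathrm{SL}(W)\otimes1$ acting on $W\otimes U$. You need the extra observation that the invariant form is then $\omega_W\otimes\beta$, with $\beta$ symmetric and nondegenerate on $U$. Consequently, among the $G^\circ$-submodules $W\otimes u$, exactly two are Lagrangian, namely $W\otimes\ell_1$ and $W\otimes\ell_2$ for the isotropic lines $\ell_i$ of $\beta$. The normalizer of $G^\circ$ permutes these two planes, so $G$ is imprimitive after all.

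\emph{The two-plane case.} Here you can simply use any basis adapted to $W_1\oplus W_2$. Your claim that a symplectic $W_1$ makes the splitting orthogonal is unjustified, but it is also unnecessary.
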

This classification is constructed based on the known classification
of Lie subgroups of wider unimodular group
$\mathrm{SL}(4,\C)\supset\mathrm{Sp}(4,\C)$. Subgroups of the
projective symplectic group are central extensions of these, and so
contain multiples of the identity matrix with non-unit determinant.
But as these commute with all matrices, the possible structures of
subgroups in items $1,2$ are unchanged.

The next two lemmas characterize the reducible case.
\begin{lemma}
  \label{lem:2}
  Let us consider the following block diagonal system
  \begin{equation}
    \label{eq:2b}
    \dot x = A x \qquad 
    A= \begin{bmatrix}
      B & C\\
      0 &  D
    \end{bmatrix},
  \end{equation}  
  where $B$, $C$ and $D$ are matrices $2\times 2$ with rational coefficients. Then the equation~\eqref{eq:7b} has the following
particular solution 
\begin{equation} \label{eq:4} W = \rme^{\int r } \begin{bmatrix}
      0 & 0\\
      0 &  J
    \end{bmatrix}, \qquad J= \begin{bmatrix}
      0 & 1\\
      -1& 0
    \end{bmatrix},
  \end{equation}
  where $r$ is a rational function.
\end{lemma}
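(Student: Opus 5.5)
The plan is a direct verification. I substitute the proposed matrix $W$ into the symplecticity condition~\eqref{eq:7b}, $A^TW+WA+W'=0$, and solve for the scalar factor $\rme^{\int r}$.

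First I write the blocks explicitly. Set
\[
A=\begin{bmatrix} B & C\\ 0 & D\end{bmatrix},\qquad A^T=\begin{bmatrix} B^T & 0\\ C^T & D^T\end{bmatrix},\qquad W=f\begin{bmatrix}0&0\\0&J\end{bmatrix},
\]
where $f=\rme^{\int r}$, so that $W'=r\,W$. A direct block multiplication gives
\[
A^TW=f\begin{bmatrix}0&0\\0&D^TJ\end{bmatrix},\qquad WA=f\begin{bmatrix}0&0\\0&JD\end{bmatrix}.
\]
The upper-left, upper-right and lower-left blocks of $A^TW+WA+W'$ vanish identically. This is where the vanishing lower-left block of $A$ matters: it prevents $C$ from contributing to $WA$. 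The product $A^TW$ picks up only the blocks $C^T\cdot 0$ and $D^TJ$, so $C$ drops out there as well.

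Next, the lower-right block. It reduces to the $2\times 2$ equation
\[
D^TJ+JD+rJ=0.
\]
Here I use the elementary identity $D^TJ+JD=(\operatorname{tr}D)\,J$, valid for every $2\times 2$ matrix $D$ and checked entrywise. For $D=\begin{bmatrix}a&b\\c&d\end{bmatrix}$ one computes $JD=\begin{bmatrix}c&d\\-a&-b\end{bmatrix}$ and $D^TJ=\begin{bmatrix}-c&a\\-d&b\end{bmatrix}$, whose sum is $(a+d)J$. The condition therefore becomes $(\operatorname{tr}D+r)J=0$, which is solved by choosing
\[
r=-\operatorname{tr}D .
\]
This $r$ is rational because the entries of $D$ are rational. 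Then $W=\rme^{-\int\operatorname{tr}D}\operatorname{diag}(0,J)$ solves~\eqref{eq:7b}, and $W$ is skew-symmetric as required.

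There is no real obstacle here. The only points to be careful about are the sign conventions in the block products, and the observation that this $W$ is singular. So the lemma yields only a particular, degenerate solution of~\eqref{eq:7b} rather than a symplectic structure. This degenerate solution is presumably what is used later to detect reducibility, via a rational solution of the exterior square.
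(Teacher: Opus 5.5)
Your verification is correct. The paper states this lemma without proof, quoting it from the literature, so direct substitution is the natural argument. Substituting into $A^TW+WA+W'=0$ and using $D^TJ+JD=(\operatorname{tr}D)J$ gives exactly the required rational $r=-\operatorname{tr}D$.

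One small misattribution in your commentary. $C$ drops out only because it always multiplies zero blocks of $W$. The vanishing lower-left block $A_{21}$ is what removes the would-be off-diagonal terms $JA_{21}$ in $WA$ and $A_{21}^TJ$ in $A^TW$. Also, the resulting solution $\rme^{-\int\operatorname{tr}D}\operatorname{diag}(0,J)$ is in general hyperexponential rather than rational, which is why Lemma~\ref{lem:5} counts hyperexponential solutions.
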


In our proof of nonintegrability, we will use the following
criterion.
\begin{lemma}
  \label{lem:5}
  Assume that the equation\begin{equation}L(y)=y^{(4)}+a_3(z)y'''+a_2(z)y''+a_1(z)y'+a_0(z)y=0,\quad '=\Dz,   
    \label{eq:4order}
  \end{equation}
  is projectively symplectic and $A$ is its corresponding companion matrix. If \eqref{eq:4order} does not admit a hyperexponential solution and Eq.~\eqref{eq:7b} has exactly one hyperexponential solution, then the differential Galois group of~\eqref{eq:4order} contains $\mathrm{Sp}(4,\C)$.
\end{lemma}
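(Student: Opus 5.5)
We reduce to the symplectic case and then use the classification of Lemma~\ref{lem:T} to exclude every proper subgroup. Since \eqref{eq:4order} is projectively symplectic, multiplying $y$ by a suitable hyperexponential function makes it symplectic. This changes neither the existence of hyperexponential solutions of \eqref{eq:4order} nor the number of hyperexponential solutions of \eqref{eq:7b}, because a hyperexponential factor only shifts the logarithmic derivatives by a rational function. So we may assume $L$ is symplectic, with rational invertible skew-symmetric $W_0$ solving \eqref{eq:7b}, and that its differential Galois group $\scG$ satisfies $\scG\subset\mathrm{Sp}(4,\C)$. Throughout we use the Galois correspondence: a one-dimensional $\scG$-invariant subspace of the solution space of a linear system is the same as a hyperexponential solution of that system. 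Also, since \eqref{eq:7b} is the exterior square of the dual system, $\scG$-invariant lines in $\Lambda^2(\C^4)^{\ast}$ correspond to hyperexponential solutions of \eqref{eq:7b}. The invariant form $W_0$ is already one such solution, so the hypothesis says that the trivial summand $\C\,W_0$ is the \emph{only} invariant line there.

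\emph{Reducible case (item 1 of Lemma~\ref{lem:T}).} Suppose $\scG$ is conjugate to a block upper triangular group.
\begin{itemize}
\item An invariant subspace of dimension $1$ is a hyperexponential solution of \eqref{eq:4order}, which is excluded.
\item An invariant subspace of dimension $3$ would, after passing to the dual, give a right factor of order one of the adjoint. By Lemma~\ref{lem:1} the map $x\mapsto Wx$ identifies solutions of the system with those of the adjoint, so $L$ itself would have a hyperexponential solution, again excluded.
\item An invariant subspace of dimension $2$ puts the companion system, after a rational gauge change, into the form \eqref{eq:2b}. Lemma~\ref{lem:2} then provides the solution $\rme^{\int r}\,\mathrm{diag}(0,J)$ of \eqref{eq:7b}. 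This solution has rank $2$, whereas $W_0$ has rank $4$, so the two are linearly independent hyperexponential solutions. This contradicts the hypothesis.
\end{itemize}
Hence $\scG$ acts irreducibly.

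\emph{Imprimitive case (item 2).} Suppose $\scG$ preserves a decomposition $\C^4=V_1\oplus V_2$ into $2$-planes, possibly swapping them. The elements $\omega_i$ spanning $\Lambda^2V_i^{\ast}$ are then either both fixed up to scalars or interchanged. Consider $\omega_1-c\,\omega_2$, with $c$ chosen so that it is not proportional to the restriction of $W_0$. This element spans a line that is invariant up to a sign, so it gives a solution $f$ of \eqref{eq:7b} such that $f^2$ is hyperexponential. Then $f'/f=\tfrac12(f^2)'/f^2$ is rational, so $f$ is hyperexponential, and it is independent of $W_0$. This is a contradiction. By Lemma~\ref{lem:T} the only remaining possibility is item 3, so $\scG\supset\mathrm{Sp}(4,\C)$. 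The scalar matrices coming from the projective extension do not alter the case analysis, as remarked after Lemma~\ref{lem:T}.

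The main obstacle is the imprimitive case. There one must make sure that the invariant element of $\Lambda^2$ produced really lies outside $\C\,W_0$, and correctly pass from ``invariant up to a finite character'' to ``hyperexponential''. I would handle this by decomposing $\Lambda^2\C^4=\C W_0\oplus\Lambda^2_0$ under $\mathrm{Sp}(4)$ and checking directly that, for $\mathrm{Sp}(2)\times\mathrm{Sp}(2)\rtimes\Z_2$, the five-dimensional summand $\Lambda^2_0$ contains a line invariant up to sign.
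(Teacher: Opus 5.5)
\textbf{Where your proof fails.} The paper states Lemma~\ref{lem:5} without proof; it is quoted as a criterion. Your skeleton is the natural one: use Lemma~\ref{lem:T}, then rule out items 1 and 2 by producing a second semi-invariant line in $\Lambda^2$. Your reducible case is correct, since the rank-two solution from Lemma~\ref{lem:2} cannot be proportional to the rank-four $W_0$.

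The gap is the imprimitive case, where you suspected it, and it cannot be closed. Your claim that $\C(\omega_1-c\,\omega_2)$ is invariant up to sign holds only when $V_1\perp V_2$ are symplectic for $W_0$. In that case the stabilizer $\scG_0$ of the splitting lies in $\mathrm{SL}_2\times\mathrm{SL}_2$ and $W_0=\omega_1+\omega_2$.

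Now suppose $V_1,V_2$ are Lagrangian for $W_0$. This is exactly what the block diagonal/anti-diagonal shape of item 2 means for the paper's $\mathbb{J}$. Then:
\begin{itemize}
\item[(i)] $V_2\cong V_1^{\ast}$;
\item[(ii)] $g\in\scG_0$ multiplies $\omega_1,\omega_2$ by $\delta(g)^{-1}$ and $\delta(g)$, where $\delta(g)=\det(g|_{V_1})$;
\item[(iii)] $W_0$ has no component in $\C\omega_1\oplus\C\omega_2$, so no combination with $c\neq 0$ is semi-invariant unless $\delta^2\equiv 1$.
\end{itemize}
Concretely, $\Lambda^2V^{\ast}=(\C\omega_1\oplus\C\omega_2)\oplus\C W_0\oplus\mathfrak{sl}(V_1)$. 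Take $\scG$ generated by the matrices $\operatorname{diag}(M,M^{-T})$, $M\in\mathrm{GL}(2,\C)$, together with $\mathbb{J}$. Then the plane $\C\omega_1\oplus\C\omega_2$ is irreducible: it carries the distinct characters $\delta^{\mp1}$, which $\mathbb{J}$ swaps. The summand $\mathfrak{sl}(V_1)$ is irreducible, and $V\cong V_1\oplus V_1^{\ast}$ has no invariant line. So both hypotheses hold, yet $\scG\not\supset\mathrm{Sp}(4,\C)$. Replacing $\mathrm{GL}(2,\C)$ by $\C^{\ast}\cdot H$, with $H$ a suitably embedded binary icosahedral group, even makes the identity component abelian. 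Your proposed check, on $\mathrm{Sp}(2)\times\mathrm{Sp}(2)\rtimes\mathbb{Z}_2$ only, misses this whole Lagrangian family.

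\textbf{What this means for the lemma.} The imprimitive step is not just under-argued: the lemma does not follow from its stated hypotheses. An extra assumption is needed, for example that the five-dimensional part of the exterior square has no second-order factor. That rules out the Lagrangian case, because $\C\omega_1\oplus\C\omega_2$ is an invariant plane there.

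Your closing step, ``only item 3 remains'', also inherits an omission in Lemma~\ref{lem:T}. Irreducible primitive proper subgroups satisfy both hypotheses as well. One example is $\mathrm{SL}(2,\C)$ acting on $\mathrm{Sym}^3\C^2$, where $\Lambda^2=\mathrm{Sym}^4\oplus\C$. Another is the finite binary icosahedral group in the same representation.

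What your argument correctly establishes is weaker than the lemma: $\scG$ is irreducible, and it preserves no $W_0$-orthogonal splitting into symplectic planes.

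Two smaller points. The transfer of the conclusion back from the symplectic twist to the original $L$ needs a sentence. It works because the two groups differ by scalars and $\mathrm{Sp}(4,\C)$ is perfect. Also, ``$f^2$ is hyperexponential'' should be phrased as: a line that is semi-invariant under a character gives a hyperexponential solution.
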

Therefore, it is essential to understand how to determine if a given equation possesses a hyperexponential solution. For an equation  $L(y) = 0$  of Fuchsian type, any hyperexponential solution must take the form  $P(z)\prod_i (z - z_i)^{e_i}$, where  $P(z) \in \mathbb{C}[z]$,  $z_i \in \mathbb{C}$  represents a singular point, and  $e_i$  are the corresponding exponents at  $z_i$. Necessary conditions for such a solution are provided in Lemma~3.1 of~\cite{Singer:95::}. From this, we derive the following proposition.
\begin{proposition}
If a Fuchsian equation of order four has a first-order factor, then either  $L(y) = 0$  or  $L^{\ast}(y) = 0$  admits a solution of the form  $P(z)\prod_i (z - z_i)^{e_i}$, where  $P(z) \in \mathbb{C}[z]$, and  $z_i \in \mathbb{C}$  are the singularities, and  $e_i$  are the exponents at  $z_i$ . Additionally, there exists an exponent at infinity,  $e_{\infty}$ , such that the total sum  $\sum_i e_i + e_{\infty}$  is a non-positive integer.
  \label{prop:reduc}
\end{proposition}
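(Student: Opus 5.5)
The plan is to reduce the statement to the existence of a hyperexponential solution, and then read off its shape from the local exponents, following Lemma~3.1 of~\cite{Singer:95::}. Suppose the Fuchsian operator $L$ of order four has a first-order factor. There are two cases, according to whether the factor stands on the right or on the left.

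\emph{Right factor.} If $L=L_2\circ(\partial_z-\omega)$ with $\omega\in\C(z)$, then any nonzero solution of $y'=\omega y$ solves $L(y)=0$. Hence $L$ has a hyperexponential solution.

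\emph{Left factor.} If $L=(\partial_z-\omega)\circ L_2$, take adjoints: $L^{\ast}=L_2^{\ast}\circ(\partial_z-\omega)^{\ast}$. Since $(\partial_z-\omega)^{\ast}=-\partial_z-\omega$ is again first order, $L^{\ast}$ has a first-order right factor and therefore a hyperexponential solution. I will also record that the adjoint of a Fuchsian operator is Fuchsian, with the same singular points, and with exponents that are shifted linearly from those of $L$. So in both cases we may work with a Fuchsian equation $\widetilde L(y)=0$, equal to $L$ or $L^{\ast}$, that has a solution $y$ with $y'/y=\omega\in\C(z)$.

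Next I determine the form of $y$ by a local analysis of $\omega$.

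\emph{Ordinary points.} At a nonsingular point $z_0$, $y$ is analytic. So $\omega$ has at most a simple pole there, with residue $\operatorname{ord}_{z_0}y\in\N_0$.

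\emph{Finite singular points.} At a singular point $z_i$, $y$ is a local solution with moderate growth, because the point is regular singular. Hence $\omega$ again has at most a simple pole, and $y=(z-z_i)^{e}g(z)$ with $g$ analytic and $g(z_i)\ne0$. Substituting the leading term $(z-z_i)^{e}$ into $\widetilde L$, the lowest-order coefficient is the indicial polynomial evaluated at $e$. This must vanish, so $e=e_i$ is one of the exponents at $z_i$.

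\emph{Assembling the form.} Collecting the zeros at ordinary points into a polynomial $P$ gives
\[
y=P(z)\prod_i(z-z_i)^{e_i},\qquad P\in\C[z].
\]

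\emph{Infinity.} Applying the same indicial argument in the variable $1/z$ shows that $y\sim c\,z^{-e_\infty}$ for some exponent $e_\infty$ at infinity. On the other hand, $y$ grows like $z^{\deg P+\sum_i e_i}$. Comparing the two gives
\[
\deg P=-e_\infty-\sum_i e_i\in\N_0,
\]
so $\sum_i e_i+e_\infty$ is a non-positive integer, as claimed.

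I expect the main obstacle to be the step "the local order equals an exponent" in the resonant situation, where exponents at a point differ by integers and logarithms may appear. I would handle it by noting that a hyperexponential $y$ cannot involve logarithms, since $y'/y$ is single-valued and meromorphic. Its valuation is then a root of the indicial equation by the leading-term computation above, whatever resonances occur. A second point to check carefully is the sign convention for $e_\infty$, so that the final condition reads "non-positive integer" rather than "non-negative". I would also verify that passing to $L^{\ast}$ keeps the equation Fuchsian with the same singular points, so that the argument applies verbatim with the exponents of $L^{\ast}$.
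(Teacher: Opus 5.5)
Your proposal is correct and takes essentially the paper's route: the paper derives the proposition by combining the right/left first-order factor dichotomy (passing to $L^{\ast}$ when the factor is on the left) with Lemma~3.1 of Singer and Ulmer, and you simply unpack that lemma through a local analysis of $\omega=y'/y$ at ordinary points, at finite singular points and at infinity. One ordering point: after collecting the finite poles of $\omega$, a factor $\exp(Q(z))$ with $Q\in\mathbb{C}[z]$ could still remain, and only the regular singularity at infinity rules it out, so the infinity step should come before you assert $y=P(z)\prod_i(z-z_i)^{e_i}$.
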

For an equation  $L(y) = 0$  that is not of Fuchsian type, determining the conditions for a hyperexponential solution is more intricate. Here, we focus on a specific scenario where the equation  $L(y) = 0$  has only a single irregular singularity at  $z = 0$. In this case, any hyperexponential solution must take the form
\begin{equation}
y(z) = \mathcal{E}(z) R(z) \prod_{i=1}^m (z - z_i)^{e_i},
\end{equation}
where  $R(z)$  is a rational function,  $z_i \in \mathbb{C}$  denotes singular points, and  $e_i$  are the corresponding exponents at  $z_i$  for  $i = 1, \dots, m$. The function  $\mathcal{E}(z)$  represents the exponential part of a formal solution at the irregular point  $z = 0$, which yields
\begin{equation}
\label{eq:exp_sol}
    \widehat{y}(z) =
\scE(z)\left[ c_0(z^{1/k})+c_1(z^{1/k})\ln(z)+\cdots + c_1(z^{1/k})\ln(z)^m\right],\quad  \scE(z)	= \exp[W(z^{-1/k})]z^a,
\end{equation}
where 
$W$ is a polynomial, $k>0$ and $m\geq 0$,  $a\in\C$,  and  $c_i$ are formal power
series.   In practice, formal solutions can be found with the help of a computer
algebra system, for example, Maple or  diagramLE.

For a differential equation of order  $n$, there are  $n$  linearly independent solutions of the form described previously. For the exponential part  $\mathcal{E}(z)$  given in equation~\eqref{eq:exp_sol}, we consider only those solutions for which  $k = 1$  and  $m = 0$. Specifically, if the exponential part is of the form  $\mathcal{E}(z) = z^a$, where  $a$  is an exponent at the singular point  $z = 0$, then the corresponding solution structure is similar to the one found in Fuchsian equations. In this scenario, if a hyperexponential solution exists, it has the same form as a Fuchsian equation.

\bibliographystyle{unsrt}

\end{document}